\documentclass{article}

\usepackage{fullpage}
\usepackage{amsfonts}
\usepackage{amssymb}
\usepackage{amsmath}
\usepackage{amsthm}
\usepackage{multirow}
\usepackage{booktabs}
\usepackage{xspace}
\usepackage{calc}
\usepackage{url}
\usepackage{csquotes}
\usepackage{graphicx}

\usepackage{hyperref}

\def\ve#1{\mathchoice{\mbox{\boldmath$\displaystyle\bf#1$}}
{\mbox{\boldmath$\textstyle\bf#1$}}
{\mbox{\boldmath$\scriptstyle\bf#1$}}
{\mbox{\boldmath$\scriptscriptstyle\bf#1$}}}

\newcommand\veb{{\ve b}}

\newcommand\vel{{\ve l}}

\newcommand\vep{{\ve p}}

\newcommand\ves{{\ve s}}

\newcommand\veu{{\ve u}}

\newcommand\vew{{\ve w}}
\newcommand\vex{{\ve x}}

\def \A {E^{(n)}}

\def\Z{\mathbb{Z}}
\def\N{\mathbb{N}}

\newcommand{\pref}{\ensuremath{\succ}}
\newcommand{\Oh}{O}

\DeclareMathOperator{\sgn}{{\rm sign}}
\DeclareMathOperator{\bool}{{\rm bool}}

\DeclareMathOperator{\rank}{{\rm rank}}

\newcommand{\maxcoef}{\ensuremath{\texttt{a}}}

\newcommand{\FPT}{{\sf FPT}\xspace}
\newcommand{\NP}{{\sf NP}\xspace}
\newcommand{\XP}{{\sf XP}\xspace}

%\pgfplotsset{compat=1.12}
\setlength{\tabcolsep}{10pt} % wider tables setting

\DeclareGraphicsExtensions{.pdf}

\newcommand{\defparproblem}[4]{
  \bigskip
  \noindent
  \fbox{
    \begin{minipage}{.96\linewidth}
      \textsc{#1} \hfill Parameter: #2\\[2pt]
      \smallskip
      \noindent
      \begin{tabular}{@{}l@{ }l}
        \emph{Input:} & \begin{minipage}[t]{\linewidth-\widthof{Input:\ \ }}
                          #3
                        \end{minipage}\\[2pt]
        \emph{Task:} & \begin{minipage}[t]{\linewidth-\widthof{Input:\ \ }}
                          #4
                       \end{minipage}
    \end{tabular}
  \end{minipage}
  }
  \medskip
}

\theoremstyle{plain}
\newtheorem{theorem}{Theorem}%[section]
\newtheorem{corollary}[theorem]{Corollary}
\newtheorem{observation}[theorem]{Observation}
\newtheorem{lemma}[theorem]{Lemma}
\newtheorem{proposition}[theorem]{Proposition}

\theoremstyle{definition}
\newtheorem{definition}[theorem]{Definition}

\begin{document}

% Author macros::begin %%%%%%%%%%%%%%%%%%%%%%%%%%%%%%%%%%%%%%%%%%%%%%%%
\title{Voting and Bribing in Single-Exponential Time}
\author{Du\v{s}an Knop\thanks{TU Berlin, Germany. \texttt{knop@kam.mff.cuni.cz}. Supported by project CE-ITI P202/12/G061 of GA \v{C}R and project 1784214 of GA UK.}
  \and
  Martin Kouteck{\'y}\thanks{Department of Applied Mathematics, Charles University, Prague, Czech Republic. \texttt{koutecky@kam.mff.cuni.cz}. Supported project 14-10003S of GA \v{C}R and projects 1784214 and 338216 of GA UK.}
  \and
  Matthias Mnich\thanks{Institut f{\"u}r Informatik, Universit{\"a}t Bonn, Bonn, Germany. \texttt{mmnich@uni-bonn.de}. Supported by ERC Starting Grant 306465 (BeyondWorstCase) and DFG project num. MN 59/4-1.}
}

\maketitle

\begin{abstract}
  We introduce a general problem about bribery in voting systems.
  In the $\mathcal R$-{\sc Multi-Bribery} problem, the goal is to bribe a set of voters at minimum cost such that a desired candidate wins the perturbed election under the voting rule $\mathcal R$.
  Voters assign prices for withdrawing their vote, for swapping the positions of two consecutive candidates in their preference order, and for perturbing their approval count to favour candidates.

  As our main result, we show that $\mathcal R$-{\sc Multi-Bribery} is fixed-parameter tractable parameterized by the number of candidates for many natural voting rules $\mathcal R$, including Kemeny rule, all scoring protocols, maximin rule, Bucklin rule, fallback rule, SP-AV, and any C1 rule.
  In particular, our result resolves the parameterized complexity of $\mathcal R$-{\sc Swap Bribery} for all those voting rules, thereby solving a long-standing open problem and ``Challenge \#2'' of the ``Nine Research Challenges in Computational Social Choice'' by Bredereck et al.

  Further, our algorithm runs in single-exponential time for arbitrary cost; it thus improves the earlier double-exponential time algorithm by Dorn and Schlotter that is restricted to the uniform-cost case for all scoring protocols, the maximin rule, and Bucklin rule.
\end{abstract}

\section{Introduction}\label{sec:introduction}
In this work we address algorithmic problems from the area of voting and bribing.
In these problems, we are given as input an election, which consists of a set $C$ of candidates and a set $V$ of voters $v$, each of which is equipped with a partial order $\pref_v$ indicating their preferences over (subsets of) the candidates.
Further, we have a fixed voting rule $\mathcal R$ (that is not part of the input), which determines how the orders of the voters are aggregated to determine the winner(s) of the election among the candidates.
Popular examples of voting rules $\mathcal R$ include ``scoring protocols'' like \emph{plurality}---where the candidate(s) ranked first by a majority of voters win(s)---or the Borda rule, where each candidate receives $|C|-i$ points from being ranked $i$-th by a voter and the candidate with most points wins; and the Copeland rule, which orders candidates by their number of pairwise victories minus their number of pairwise defeats.

The goal is to \emph{perturb} the given election~$(C,V)$ by \emph{bribing voters} through \emph{bribery actions}~$\Gamma$ in such a way that a designated candidate $c^\star\in C$ wins the perturbed election~$(C,V)^{\Gamma}$ under the voting rule $\mathcal R$.
Such perturbation problems model various real-life issues, such as actual bribery, campaign management, or post-election checks, as in destructive bribery (known as margin of victory); for an overview of the many flavours of bribery problems we refer to a recent survey by Faliszewski and Rothe~\cite{FaliszewskiRothe2015}.

\paragraph{$\mathcal R$-{\sc Multi-Bribery}}
Perturbation is performed by the actions of \emph{swapping} the position of two adjacent candidates in the preference order of some voter, by \emph{push actions} that perturb the approval count of a voter, and \emph{control changes} that (de)activate some voters.
The algorithmic problem is to achieve the goal by performing the most cost-efficient actions.
To measure cost of swaps, we consider the model introduced by Elkind et al.~\cite{ElkindEtAl2009} where each voter may assign different prices $\sigma^v(c,c')$ for swapping two consecutive candidates $c,c'$ in their preference order; this captures the notion of small changes and comprises the preferences of the voters.
If voter $v$ is involved in a swap or push action, a one-time influence cost $\iota^v$ occurs.
We additionally allow voter-individual cost $\pi^v$ for push actions, and voter-individual cost $\alpha^v$ and $\delta^v$ for activation and deactivation.
Our model is general enough that we even allow \emph{negative} cost.
We implicitly allow preference orders to be partial, in particular we allow top-truncated orders, where voters rank only their top candidates and are indifferent towards the other candidates (see Sect.~\ref{sec:computationalsocialchoiceproblems}).
We call this very general set-up the $\mathcal R$-{\sc Multi-Bribery} problem.

Various special cases of the $\mathcal R$-{\sc Multi-Bribery} problem have been studied in the literature; see Table~\ref{tab:results} for an overview of which problems are captured by $\mathcal R$-{\sc Multi-Bribery}:
\begin{table}[htb]
  \centering
  \resizebox{\textwidth}{!}{%
    \begin{tabular}{ll}
      \toprule
      problem name                         & specialization of $\mathcal R$-{\sc Multi-Bribery}\\
      \midrule
      $\mathcal R$-{\sc \$Bribery}         & $\sigma^v \equiv 0,\pi^v \equiv \infty, \alpha^v \equiv \delta^v \equiv \infty$\\
      \midrule
      $\mathcal R$-{\sc Manipulation}      & $\iota^v \equiv 0$ for $v\in S$, $\iota^v \equiv \infty$ for $v\notin M$\\
      \midrule
      $\mathcal R$-CCAV/$\mathcal R$-CCDV  & $\iota^v \equiv 0,\sigma^v \equiv \pi^v \equiv \infty$\\
      \midrule
      $\mathcal R$-{\sc Swap Bribery}      & $\pi^v \equiv \infty$, $\alpha^v \equiv \delta^v \equiv \infty$, $\iota^v \equiv 0$\\
      \midrule
      $\mathcal R$-{\sc Shift Bribery}     & $\mathcal R$-{\sc Swap Bribery} with $\sigma^v(c,c')=\infty$ for $c^\star\notin\{c,c'\}$\\
      \midrule
      $\mathcal R$-{\sc Support Brib.}     & $\sigma^v \equiv \alpha^v \equiv \delta^v \equiv \infty, \iota^v \equiv 0$\\
      \midrule
      $\mathcal R$-{\sc Mixed Bribery}     & $\alpha^v \equiv \delta^v \equiv \infty$, $\iota^v \equiv 0$\\
      \midrule
      $\mathcal R$-{\sc Extension Bribery} & $\sigma^v(c,c')=0$ if $c,c'$ unranked, else $\sigma^v(c,c') = \infty$; $\alpha^v \equiv \delta^v \equiv \infty$, $\iota^v = 0$\\
      \midrule
      $\mathcal R$-{\sc Possible Winner}   & reduce to \textsc{$\mathcal R$-Swap Bribery}~\cite[Thm. 2]{ElkindEtAl2009}\\
      \midrule
      {\sc Dodgson Score}                  & Condorcet-{\sc Swap Bribery} with $\sigma^v \equiv 1$\\
      \midrule
      {\sc Young Score}                    & Condorcet-CCDV with $\delta^v \equiv 1$\\
      \bottomrule
    \end{tabular}
  }
  \caption{\label{tab:specialcases}$\mathcal R$-{\sc Multi-Bribery} generalizes several studied bribery problems, whose formal definitions we give in Appendix~\ref{sec:problemdefinitions}.}
\end{table}

For instance, in the $\mathcal R$-{\sc Swap Bribery} problem, only swaps are permitted.
On the other hand, in the $\mathcal R$-{\sc Support Bribery} problem, only push actions are allowed.
And in the~{\sc Young Score} problem, only control changes are allowed.

The aforementioned many special cases of $\mathcal R$-{\sc Multi-Bribery} have been investigated extensively from an algorithmic viewpoint.
As it turns out, most of the cases looked at are $\mathsf{NP}$, such as e.g. $\mathcal R$-{\sc Swap Bribery}.
Therefore, we expect algorithms solving these problem exactly to take superpolynomial time (assuming $\mathsf{P}\neq\mathsf{NP}$).
Yet, in many application scenarios it is reasonable to assume that the number $|C|$ of candidates is small; it has therefore been of high interest to design algorithms for $\mathsf{NP}$-hard voting and bribery problem that exploit this property.
In particular, a quest for algorithms which solve instances $I$ of size $n$ in time $f(|C|)\cdot n^{O(1)}$ for some function $f$ has been made; such algorithms are called \emph{fixed-parameter algorithms}.
Fixed-parameter algorithms are contrasted with so-called $\mathsf{XP}$-algorithms which have run times of the form $n^{f(|C|)}$.
Whereas $\mathsf{XP}$-algorithms are generally considered impractical even for small instances $I$ and few candidates, fixed-parameter algorithms have the potential to be practical, provided that the function $f$ exhibits moderate growth.

The current situation for voting and bribing problems is that indeed, a large number of fixed-parameter algorithms for $\mathsf{NP}$-hard voting and bribing problems parameterized by $|C|$ have been designed, for a multitude of voting rules~\cite{BartholdiEtAl1989,BetzlerEtAl2009,BredereckEtAl2014b,BredereckEtAl2014c,BredereckEtAl2015,BredereckEtAl2016b,BredereckEtAl2016,DornSchlotter2012,FaliszewskiEtAl2011}.
For instance, a fundamental algorithm in that direction is due to Dorn and Schlotter~\cite{DornSchlotter2012}, who show how to solve $\mathcal R$-{\sc Swap Bribery} with uniform costs\footnote{Bredereck et al.~\cite{BredereckEtAl2014b} pointed out that the algorithm by Dorn and Schlotter only works for uniform costs.} in time $2^{2^{O(|C|)}}\cdot n^{O(1)}$ for all so-called linearly describable voting rules $\mathcal R$; here, uniform costs means that swapping any two candidates always has the same cost for all voters.
In the Dorn-Schlotter algorithm, as well as several other fixed-parameter algorithms designed for few candidates, the function $f(|C|)$ grows quite fast, often double-exponential in~$|C|$, which a priori makes these algorithms impractical even for very few candidates.
As the double-exponential dependence on $|C|$ in those algorithms often stems from solving, as a subroutine, a certain integer linear program (ILP) by means of a deep result of Lenstra~\cite{Lenstra1983}, Bredereck et al.~\cite{BredereckEtAl2014} put forward the following ``Challenge~\#1'', as part of their ``Nine Research Challenges in Computational Social Choice'':

\begin{displayquote}
    \emph{Many [FPT results in computational social choice] rely on a deep result from combinatorial optimization due to Lenstra [that] is mainly of theoretical interest; this may render corresponding fixed-parameter tractability results to be of classification nature only.
     %  %Fixed-parameter tractability results based on Integer Linear Programming also tend to give less insight into the structural properties of the problems than combinatorial algorithms. [...]
     Can the mentioned ILP-based [...] results be replaced by direct combinatorial [...] fixed-parameter algorithms?}
\end{displayquote}

Another downside of the ``ILP-based approach'' is that it inherently treats voters not as individuals, but as groups which share preferences.
This makes it difficult to obtain algorithms where voters from the same group differ in some way, such as by the cost of bribing them.
Their ``Challenge \#2'' thus reads:
\begin{displayquote}
      \emph{[T]here is a huge difference between [...] problems, where each voter has unit cost for being bribed, and the other flavors of bribery, where each voter has individually specified price [...] and it is not known if they are in $\mathsf{FPT}$ or hard for $\mathsf{W[1]}$.
      What is the exact parameterized complexity of the $\mathcal R$-{\sc Swap Bribery} and $\mathcal R$-{\sc Shift Bribery} parameterized by the number of candidates, for each voting rule $\mathcal R$?}
\end{displayquote}

\subsection{Our Contribution}\label{sec:ourcontribution}
Our main result is a fixed-parameter algorithm for $\mathcal R$-{\sc Multi-Bribery} parameterized by the number of candidates, for many fundamental voting rules $\mathcal R$.
Our algorithm has a few advantages over previous works, in that it works for voter-dependent cost functions, and it runs in time that is only single-exponential in $|C|$.
The algorithm applies to a large number of voting rules $\mathcal R$, such as all \emph{natural} scoring protocols where each candidate receives at most $|C|$ points from each voter.
\begin{theorem}
\label{thm:main-mixedplusbribery}
  $\mathcal R$-{\sc Multi-Bribery} is fixed-parameter tractable parameterized by the number of candidates, and can be solved in time
  \begin{itemize}
    \item $2^{O(|C|^6\log|C|)}\cdot n^{3}$ when $\mathcal R$ is any natural scoring protocol, any C1 rule,  or SP-AV,
    \item $2^{O(|C|^6\log|C|)}\cdot n^{4}$ when $\mathcal R$ is the maximin, Bucklin or fallback rule, and
     \item $2^{O(|C|!^6\log|C|)}\cdot n^{3}$ when $\mathcal R$ is the Kemeny rule.
  \end{itemize}
\end{theorem}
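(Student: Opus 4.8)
The plan is to formulate $\mathcal R$-{\sc Multi-Bribery} as an \emph{$n$-fold integer program} and then to invoke the known fixed-parameter algorithm for such programs. Recall that an $n$-fold program consists of $n$ identical ``bricks,'' each contributing a block of columns, subject to $s$ local constraints per brick together with $r$ global constraints that link all bricks; the solver runs in time exponential only in $r$, $s$, and the largest coefficient $\Delta$, but merely polynomial in $n$ and in the number $t$ of columns per brick. The decisive modelling decision is to devote one brick to each individual voter. This is precisely what allows voter-individual prices: each brick carries its own cost vector, so the costs $\sigma^v(c,c')$, $\iota^v$, $\pi^v$, $\alpha^v$, and $\delta^v$ of voter $v$ are recorded locally and may even be negative without affecting the method.

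First I would fix the local structure of one brick. For voter $v$ I index the columns by the possible \emph{final configurations} of $v$ in the perturbed election---the final preference order, the perturbed approval count, and the active/inactive status---and impose local constraints ensuring that exactly one valid configuration is chosen. The cost attached to a column is the total price that the corresponding actions incur for $v$: the weighted swap distance from $\pref_v$ to the chosen final order (which, by the standard additive form of the swap distance, we precompute for every target order), plus the relevant $\iota^v$, $\pi^v$, $\alpha^v$, or $\delta^v$. Because each column already corresponds to a complete configuration, its cost is a single precomputed number, so no linearity of the swap distance in the decision variables is required.

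Next I would write the global constraints that couple the bricks and assert that $c^\star$ wins under $\mathcal R$. For a scoring protocol the total score of a candidate $c$ is a linear function of the chosen configurations, and the winning condition is the family of $|C|-1$ inequalities stating that the total score of $c^\star$ is at least that of each $c$; for a natural protocol the coefficients are positional scores bounded by $|C|$, so $\Delta$ stays polynomial. For a C1 rule I would first guess the majority tournament on $C$---one of at most $2^{O(|C|^2)}$ possibilities, a quantity depending only on $|C|$---verify rule-specifically that $c^\star$ wins on the guessed tournament, and then enforce by $O(|C|^2)$ linear margin inequalities that the bribed profile induces exactly that tournament. In every case the number $r$ of global rows, the number $s$ of local rows, and the coefficient $\Delta$ are polynomial in $|C|$, while the column count $t$, even if as large as $|C|!$, enters only polynomially; the $n$-fold solver therefore yields the bound $2^{O(|C|^6\log|C|)}\cdot n^{3}$, and SP-AV is handled by the same template.

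The remaining rules follow the same scheme after one preprocessing guess. For maximin, Bucklin, and fallback the winner is defined through a minimum or a majority threshold that is not directly linear, so I would guess the critical value (the maximin threshold, respectively the level at which $c^\star$ reaches a majority) among its $O(n)$ possibilities, solve one $n$-fold program per guess with the value fixed, and keep the cheapest solution; this extra guess is what raises the running time to $n^{4}$. The Kemeny rule is the expensive case: asserting that $c^\star$ tops a Kemeny consensus forces us to compare the score of a guessed consensus ranking against that of \emph{every} competing ranking, so the number of global rows climbs to $|C|!$ and, since $r$ sits in the exponent of the solver, the bound degrades to $2^{O(|C|!^6\log|C|)}\cdot n^{3}$. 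I expect the main obstacle throughout to be expressing each winning condition \emph{linearly} while keeping $\Delta$ polynomial---this is exactly why rules with maxima, minima, or thresholds demand the auxiliary guess---and the second delicate point is to certify that the weighted swap distances have the clean additive form that makes the per-configuration costs computable.
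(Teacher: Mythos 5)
Your runtime analysis rests on a false premise about the $n$-fold solver. You ``recall'' that it runs in time exponential only in $r$, $s$, and the largest coefficient, ``but merely polynomial in $n$ and in the number $t$ of columns per brick.'' The algorithm of Hemmecke, Onn and Romanchuk---the tool on which the theorem's concrete bounds are built (Proposition~\ref{thm:nfold})---runs in time $\maxcoef^{O(trs+t^2s)}\cdot O(n^3\langle\vew,\veb,\vel,\veu\rangle)$, i.e., the brick dimension $t$ sits \emph{in the exponent}, even quadratically. Your central modelling decision, indexing the columns of each brick by all final configurations of a voter (final order $\times$ approval count $\times$ status), forces $t\geq|C|!$, so with this solver your formulation gives a bound of the form $\maxcoef^{\Omega(|C|!^2)}$ for \emph{every} rule; the claimed $2^{O(|C|^6\log|C|)}\cdot n^3$ for scoring protocols, C1 rules, SP-AV, maximin, Bucklin and fallback does not follow. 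This is precisely the obstacle the paper's proof is engineered to avoid: it keeps $t$, $r$, $s$ and $\maxcoef$ all polynomial in $|C|$ by encoding each voter's perturbed order as a vector of $|C|$ rank variables forming a permutation, plus $O(|C|^2)$ swap-indicator, push and control variables per brick, and it builds the ``extended $n$-fold IP'' toolkit ($\bool_m$, $\sgn_m$, inequalities, disjunctions, Theorem~\ref{thm:extendednfold}) exactly so that the winning conditions can be expressed over such compact bricks. (Solvers whose dependence on $t$ is polynomial do exist in literature postdating this paper, but your proposal treats this as known background rather than proving or citing it, and it is incompatible with the solver from which the stated bounds are derived.)

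A second, smaller gap is the maximin rule: you guess only the threshold $B$. But ``$c^\star$ wins'' requires $v_*(c)<B$ for every $c\neq c^\star$, where $v_*(c)=\min_{c'}v(c,c')$; a strict upper bound on a \emph{minimum} is a disjunction (``there exists some $c'$ with $v(c,c')<B$'') and cannot be written as a conjunction of linear global constraints once $B$ is fixed. The paper resolves this by additionally guessing a defeater $d(c)$ for each $c\neq c^\star$---at most $(|C|-1)^{|C|}$ extra guesses, a function of $|C|$ only---after which the constraints $\sum_{v}x^{va}_{(c,d(c))}<B$ become linear. Similarly, for Bucklin and fallback the majority threshold is $|V_a^\gamma|/2$, so the number of active voters after the control change must itself be guessed and fixed by a global constraint; your single ``critical value'' guess does not capture both the winning round and this quantity.
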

In summary, our algorithm subsumes, and improves, \emph{all} previously devised algorithms for the problems listed in Table~\ref{tab:specialcases}.
For some problems, such as {\sc Young Sore}, our algorithm yields the first improvement since 1977; we summarize the comparison in Table~\ref{tab:results}.
\begin{table}[htb]
  \centering
  \resizebox{\textwidth}{!}{%
     \begin{tabular}{llll}
       \toprule
                                             & \multicolumn{2}{c}{previous best result}                                                           & new result\\
       problem                               & run time / hardness                                & voting rules $\mathcal R$                     &\\
       \midrule
       $\mathcal R$-{\sc \$Bribery}          & $2^{2^{O(|C|)}} n^{O(1)}$                          & Approval~\cite{BredereckEtAl2015}             & $2^{O(|C|^6\log|C|)}\cdot n^{3}$ \\
       \midrule
       $\mathcal R$-{\sc Manipulation}       & $2^{2^{O(|C|)}} n^{O(1)}$                          & Borda~\cite{BetzlerEtAl2011}                  & $2^{O(|C|^6\log|C|)}\cdot n^{3}$ \\
       \midrule
       $\mathcal R$-CCAV/$\mathcal R$-CCDV   & $2^{2^{O(|C|)}} n^{O(1)}$                          & Approval~\cite{BredereckEtAl2015}             & $2^{O(|C|^6\log|C|)}\cdot n^{3}$ \\
       \midrule
       $\mathcal R$-{\sc Swap Bribery}       & $2^{2^{O(|C|)}} n^{O(1)}$, uniform cost            & Approval~\cite{DornSchlotter2012}             & $2^{O(|C|^6\log|C|)}\cdot n^{3}$ \\
       \midrule
       $\mathcal R$-{\sc Shift Bribery}      & \XP-algor., arbitrary cost,                     & Borda, Cope-\\
                                             & \FPT-AS, restricted cost                           & land, Maximin~\cite{BredereckEtAl2016b}       & $2^{O(|C|^6\log|C|)}\cdot n^{4}$ \\
       \midrule
       $\mathcal R$-{\sc Support Brib.}      & \NP-hard                                           & Fallback, SP-AV~\cite{SchlotterEtAl2017}      & $2^{O(|C|^6\log|C|)}\cdot n^{4}$ \\
       \midrule
       $\mathcal R$-{\sc Mixed Bribery}      & \NP-hard                                           & SP-AV~\cite{ElkindEtAl2009}                   & $2^{O(|C|^6\log|C|)}\cdot n^{3}$ \\
       \midrule
       $\mathcal R$-{\sc Extension Bribery}  & \NP-hard                                           & Borda, Cope-\\
                                             &                                                    & land$^{0}$, Maximin~\cite{BaumeisterEtAl2012} & $2^{O(|C|^6\log|C|)}\cdot n^{4}$ \\
       \midrule
       $\mathcal R$-{\sc Possible Winner}    & $2^{2^{O(|C|)}} n^{O(1)}$                          & Bucklin, Copeland,                             &\\
                                             &                                                    & pos. scoring protocols~\cite{BetzlerEtAl2009} & $2^{O(|C|^6\log|C|)}\cdot n^{3}$ \\
      \midrule
      {\sc Dodgson Score}                    & $2^{2^{O(|C|)}} n^{O(1)}$~\cite{BartholdiEtAl1989} & & $2^{O(|C|^6\log|C|)}\cdot n^{3}$ \\
      \midrule
      {\sc Young Score}                      & $2^{2^{O(|C|)}} n^{O(1)}$~\cite{Young1977}         & & $2^{O(|C|^6\log|C|)}\cdot n^{3}$ \\
      \bottomrule
    \end{tabular}}
    \caption{\label{tab:results}Summary of results from this paper for $\mathcal R$-{\sc Multi-Bribery} compared to previous works for special cases.
      In each row corresponding to a problem $\mathcal R$-{\sc Problem}, we state the previous best known dependency on $|C|$ for voting rules $\mathcal R$ for which $\mathcal R$-{\sc Problem} is known to be $\mathsf{NP}$-hard.
      For $\mathcal R$ being the Kemeny rule, no previous results are known to us.
      For $\mathcal R$-{\sc Shift Bribery}, \FPT-AS refers to a \emph{fixed-parameter approximation scheme} which is an algorithm that yields a $(1-\varepsilon)$-approximate solution in time $f(1/\varepsilon,|C|)\cdot n^{O(1)}$ for some superpolynomial function $f$ and any $\varepsilon > 0$; it is thus a weaker result than a fixed-parameter algorithm.
    }
\end{table}

\paragraph{Applications of Theorem~\ref{thm:main-mixedplusbribery}.}
We argued that $\mathcal R$-{\sc Multi-Bribery} generalizes many well-studied voting and bribing problems, parameterized by the number of candidates.
A direct corollary of Theorem~\ref{thm:main-mixedplusbribery} is:
\begin{corollary}
    \label{thm:swapbribery}
    Let $\mathcal R$ be any natural scoring protocol, a C1 rule, the maximin rule, the Bucklin rule, the SP-AV rule, the fallback rule, or Kemeny rule.
    Then $\mathcal R$-{\sc Swap Bribery} is fixed-parameter tractable parameterized by the number $|C|$ of candidates.
\end{corollary}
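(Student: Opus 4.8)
The plan is to obtain the corollary as an immediate specialization of Theorem~\ref{thm:main-mixedplusbribery}, using the correspondence recorded in Table~\ref{tab:specialcases}. Concretely, $\mathcal R$-{\sc Swap Bribery} is exactly the restriction of $\mathcal R$-{\sc Multi-Bribery} in which push actions and control changes are disallowed while swaps carry no per-voter influence cost; that is, the instances with $\pi^v \equiv \infty$, $\alpha^v \equiv \delta^v \equiv \infty$, and $\iota^v \equiv 0$ for every voter $v$. So I would describe a polynomial-time reduction that maps a given $\mathcal R$-{\sc Swap Bribery} instance to the $\mathcal R$-{\sc Multi-Bribery} instance with the same candidate set $C$, the same voters with the same preference orders, the same swap prices $\sigma^v$, and the remaining cost functions fixed as above, and then invoke the theorem.

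First I would verify that this reduction preserves the set of feasible perturbations together with their costs. Since both problems use the identical voting rule $\mathcal R$ and the identical success condition (a designated candidate $c^\star$ wins the perturbed election), and since the only bribery action available at finite cost in the constructed instance is the swap action with price $\sigma^v(c,c')$, any solution to the $\mathcal R$-{\sc Multi-Bribery} instance is a sequence of swaps whose total cost equals its cost as an $\mathcal R$-{\sc Swap Bribery} solution, and vice versa. The one point requiring care is the meaning of the infinite costs: if the $\mathcal R$-{\sc Multi-Bribery} framework treats $\infty$ literally as ``forbidden'', nothing more is needed; otherwise I would replace each $\infty$ by a finite bound $M$ strictly larger than the sum of all (finite) swap costs, which guarantees that no minimum-cost solution ever performs a push or control change, so the two optimal costs coincide.

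With the equivalence in hand, the corollary follows: the number $|C|$ of candidates is identical in both instances and the reduction runs in polynomial time, so fixed-parameter tractability parameterized by $|C|$ transfers directly from $\mathcal R$-{\sc Multi-Bribery} to $\mathcal R$-{\sc Swap Bribery}, and the concrete run-time bounds of Theorem~\ref{thm:main-mixedplusbribery} carry over up to the fixed polynomial factor for each listed rule $\mathcal R$.

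I do not expect a genuine obstacle here: the entire content of the corollary already resides in Theorem~\ref{thm:main-mixedplusbribery}, and the remaining work is the bookkeeping of the specialization. The only mild subtlety is the handling of the forbidden actions discussed above, which is dispatched by the large-$M$ substitution.
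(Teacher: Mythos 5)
Your proposal is correct and matches the paper's own (implicit) argument: the paper derives this corollary directly from Theorem~\ref{thm:main-mixedplusbribery} via the specialization $\pi^v \equiv \infty$, $\alpha^v \equiv \delta^v \equiv \infty$, $\iota^v \equiv 0$ recorded in Table~\ref{tab:specialcases}, exactly as you do. Your extra care about interpreting $\infty$ is fine (the paper's framework treats such actions as forbidden, e.g.\ via variable bounds in the IP), though with negative costs allowed your big-$M$ should dominate the sum of \emph{absolute values} of the finite costs.
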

This solves ``Challenge \#2'' by Bredereck et al.~\cite{BredereckEtAl2014}.
In particular, for scoring protocols, maximin rule and Bucklin rule, Corollary~\ref{thm:swapbribery} extends and improves an algorithm by Dorn and Schlotter~\cite{DornSchlotter2012} that is restricted to the uniform cost case of $\mathcal R$-{\sc Swap Bribery}, and requires double-exponential run time $2^{2^{O(|C|)}}\cdot n^{O(1)}$.
% achieve the exponential improvements over previous run times for $\mathcal R$-{\sc Swap Bribery}.

We remark that it is unclear (cf. \cite[p. 338]{FaliszewskiEtAl2011}) if the Kemeny rule can be \emph{described by linear inequalities} as defined by Dorn and Schlotter~\cite{DornSchlotter2012}; even if it does, ours is the first fixed-parameter algorithm for $\mathcal R$-{\sc Swap Bribery} under the Kemeny rule, as Dorn and Schlotter's algorithm only applies to the unit-cost case.

Another corollary of Theorem~\ref{thm:main-mixedplusbribery} is the following:
\begin{corollary}
  $\mathcal R$-{\sc Shift Bribery} is fixed-parameter tractable parameterized by the number of candidates, for $\mathcal R$ being the Borda rule, the maximin rule and the Copeland$^\alpha$ rule.
\end{corollary}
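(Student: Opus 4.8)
The plan is to derive this corollary directly from Theorem~\ref{thm:main-mixedplusbribery} by exhibiting $\mathcal R$-{\sc Shift Bribery} as a special case of $\mathcal R$-{\sc Multi-Bribery} for each of the three named rules. First I would recall, following the specializations collected in Table~\ref{tab:specialcases}, that $\mathcal R$-{\sc Shift Bribery} is obtained from $\mathcal R$-{\sc Swap Bribery} by forbidding every swap that does not involve the designated candidate $c^\star$, i.e.\ by setting $\sigma^v(c,c')=\infty$ whenever $c^\star\notin\{c,c'\}$; and that $\mathcal R$-{\sc Swap Bribery} is in turn the special case of $\mathcal R$-{\sc Multi-Bribery} with $\pi^v\equiv\infty$, $\alpha^v\equiv\delta^v\equiv\infty$, and $\iota^v\equiv 0$. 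Composing these two restrictions expresses any $\mathcal R$-{\sc Shift Bribery} instance as an $\mathcal R$-{\sc Multi-Bribery} instance of the same size.

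The one point that needs checking is that the shift-bribery price functions are faithfully represented by the resulting swap prices. Since in shift bribery the cost charged to a voter $v$ depends only on how many positions $c^\star$ is moved upward in $\pref_v$, I would encode the marginal cost of the $j$-th upward shift of $c^\star$ as the price $\sigma^v$ of the corresponding swap of $c^\star$ with the candidate immediately above it, while assigning infinite cost to every other swap; this guarantees that no admissible bribery moves any candidate other than $c^\star$, and that the total swap cost of raising $c^\star$ by $k$ positions equals the shift-bribery price of that shift. This transformation is polynomial and preserves both feasibility and cost, so optimal solutions of the two instances correspond.

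It then remains to verify that Borda, maximin, and Copeland$^\alpha$ all fall within the scope of Theorem~\ref{thm:main-mixedplusbribery}. The Borda rule awards each candidate at most $|C|-1\le|C|$ points from any single voter, hence is a natural scoring protocol; the maximin rule is listed explicitly in the theorem; and Copeland$^\alpha$ depends only on the pairwise majority relation together with its ties, so it is a C1 rule. Each of the three rules is therefore covered, and Theorem~\ref{thm:main-mixedplusbribery} yields a fixed-parameter algorithm parameterized by $|C|$ for $\mathcal R$-{\sc Multi-Bribery}, and hence for its special case $\mathcal R$-{\sc Shift Bribery}, completing the proof. I expect no genuine obstacle beyond care in the two bookkeeping steps above---confirming that the per-swap cost encoding exactly reproduces the shift-bribery price function (in particular when $c^\star$ already sits high in a voter's order or the order is truncated) and correctly classifying Copeland$^\alpha$ as a C1 rule---both of which are routine once the definitions are in hand, so the corollary is essentially immediate from the main theorem.
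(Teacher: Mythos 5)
Your proposal is correct and follows essentially the same route as the paper: the corollary is obtained by specializing $\mathcal R$-{\sc Multi-Bribery} (via the cost restrictions in Table~\ref{tab:specialcases}, composing the {\sc Swap Bribery} and {\sc Shift Bribery} restrictions) and then observing that Borda is a natural scoring protocol, maximin is covered explicitly, and Copeland$^\alpha$ is a C1 rule. Your extra bookkeeping step translating per-shift prices into marginal per-swap costs is sound, and is in fact built into the paper's formalization, which defines {\sc Shift Bribery} directly through swap cost functions restricted to swaps involving $c^\star$.
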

This way, we simultaneously improve the fixed-parameter algorithm by Dorn and Schlotter~\cite{DornSchlotter2012} for uniform cost, the $\mathsf{XP}$-algorithm and the fixed-parameter approximation scheme for arbitrary cost by Bredereck et al.~\cite{BredereckEtAl2014b}.

Further, we have the following:
\begin{corollary}
  Approval-{\sc \$Bribery}, Approval-\$CCAV and Approval-\$CCDV can be solved in time\linebreak $2^{O(|C|^6\log|C|)}\cdot n^{4}$.
\end{corollary}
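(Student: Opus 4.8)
The plan is to obtain all three problems as special cases of Approval-{\sc Multi-Bribery} and then to invoke Theorem~\ref{thm:main-mixedplusbribery} with $\mathcal R$ set to the approval rule. First I would express each problem through an appropriate restriction of the per-voter cost functions, following the pattern of Table~\ref{tab:specialcases}. For Approval-{\sc \$Bribery}, where the briber may reset the approval set of any voter $v$ to an arbitrary one at the individual price $\rho_v$, I would put $\iota^v := \rho_v$ and $\sigma^v \equiv \pi^v \equiv 0$, while forbidding control by $\alpha^v \equiv \delta^v \equiv \infty$. Since the influence cost $\iota^v$ is charged only once per voter, paying it permits unlimited cost-free swaps and pushes on $v$, and these jointly realise an arbitrary change of $v$'s approval set; thus a Multi-Bribery solution of cost $B$ corresponds exactly to a priced bribery of cost $B$, and conversely. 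For Approval-\$CCAV and Approval-\$CCDV I would instead keep $\iota^v \equiv 0$ and $\sigma^v \equiv \pi^v \equiv \infty$, and encode the individual prices for adding an initially inactive voter through its activation cost $\alpha^v$, respectively the prices for deleting an active voter through its deactivation cost $\delta^v$, setting the remaining control cost to $\infty$. Because $\mathcal R$-{\sc Multi-Bribery} admits arbitrary voter-dependent costs, these priced variants are captured without distortion.

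The second step is to confirm that the approval rule meets the hypotheses under which Theorem~\ref{thm:main-mixedplusbribery} applies and to read off the running time. I would argue that approval voting is handled by the same machinery as the other approval-based rules in the theorem: winner determination reduces to comparing approval scores, and to certify that $c^\star$ wins I would guess its final approval score among the $O(n)$ possible values and then solve the resulting Multi-Bribery instance under this side constraint. Each guess invokes the algorithm of Theorem~\ref{thm:main-mixedplusbribery} once, which costs $2^{O(|C|^6\log|C|)}\cdot n^{3}$, so the additional factor of $n$ from the guesses yields the claimed bound $2^{O(|C|^6\log|C|)}\cdot n^{4}$ for each of the three problems.

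The hard part, and the step that needs the most care, will be the faithfulness of the first reduction for Approval-{\sc \$Bribery}: one must verify that the combination of a one-time influence payment with cost-free swaps and pushes reproduces exactly the set of reachable votes of a bribed voter, with total cost equal to the bribe price and with no cheaper way to effect the same change. The reductions for \$CCAV and \$CCDV are comparatively direct, as activation and deactivation map one-to-one onto adding and deleting voters. The remaining checks are routine, namely that the guessed approval score is consistent with the perturbed election and that the cost bookkeeping matches, both of which follow from the definitions.
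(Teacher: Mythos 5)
Your overall strategy---specialize the per-voter cost functions of \textsc{Multi-Bribery} and invoke Theorem~\ref{thm:main-mixedplusbribery}---is exactly the paper's route: the paper treats this corollary as a direct consequence of Theorem~\ref{thm:main-mixedplusbribery} via the specializations of Table~\ref{tab:specialcases}, with the approval rule covered because it is precisely SP-AV (rule (R5) in Sect.~\ref{sec:voting_rules}). Your \$CCAV/\$CCDV encodings coincide with the paper's. However, two steps of your write-up deviate in ways worth fixing.

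First, for Approval-\textsc{\$Bribery} you set $\pi^v \equiv 0$, so a bribed voter may also change their approval count for free. The paper's Table~\ref{tab:specialcases} instead sets $\pi^v \equiv \infty$, matching its formal definition of \textsc{\$Bribery} in Appendix~\ref{sec:problemdefinitions}, under which only the preference order of a bribed voter changes while $a^v$ stays fixed. Under the approval rule these are \emph{different problems}: with free pushes a bribed voter can, for instance, be made to approve only $c^\star$, withdrawing points from every other candidate, which the paper's definition does not permit. Your reading (arbitrary new ballot) is defensible relative to the external literature, but relative to the definitions this corollary refers to, your reduction targets a different problem; you would need to either justify the alternative definition or switch to $\pi^v \equiv \infty$.

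Second, guessing $c^\star$'s final approval score is both unnecessary and not actually licensed by Theorem~\ref{thm:main-mixedplusbribery}: the theorem solves \textsc{Multi-Bribery}, not \textsc{Multi-Bribery} subject to a side constraint, so ``each guess invokes the algorithm once'' is not a black-box application---you would have to re-open the $n$-fold IP formulation to impose the constraint. No guess is needed at all: unlike maximin or Bucklin, the approval winning condition is a direct comparison of score sums, which is exactly why SP-AV appears in the $n^3$ bullet of the theorem. The exponent $4$ in the corollary does not arise from guessing; applying the theorem for SP-AV gives $2^{O(|C|^6\log|C|)}\cdot n^3$ up to the encoding length $\langle \vew \rangle$ of the arbitrary price vector in Proposition~\ref{thm:nfold} (which is where a further factor of roughly $n$ can enter), and in any case $n^3 \le n^4$, so the claimed bound follows once the first issue is repaired.
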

This improves a recent result by Bredereck et al.~\cite{BredereckEtAl2015} who solved these problems in time that is double-exponential in $|C|$.

\subsection{Our Approach}
The run times that we achieve in Theorem~\ref{thm:main-mixedplusbribery} are (except for the Kemeny rule) only single-exponential in $|C|$.
To achieve this, we avoid using Lenstra's algorithm for solving fixed-dimension ILPs~\cite{Lenstra1983}, which was the method of choice so far (and which led to double-exponential run times).
Typically, when using Lenstra's algorithm one has to ``group objects'' in order to be able to bound their number in terms of the used parameters.
Instead, we formulate the $\mathcal R$-{\sc Multi-Bribery} problem in terms of an $n$-fold integer program (IP).
Unlike fixed-dimension ILPs, $n$-fold IPs allow variable dimension at the expense of a more rigid block structure of the constraint matrix.
We manage to encode the $\mathcal R$-{\sc Multi-Bribery} problem for many voting rules $\mathcal R$ in a constraint matrix that has this required structure.
The formulations are not straightforward: rather, we model the problems in terms of an ``extended'' $n$-fold IP which has a more general format than is required by the ``standard'' $n$-fold IP discussed in the literature.
Then we show how to efficiently transform any extended $n$-fold IP into a standard $n$-fold IP.
The dimension of this IP is not bounded by \emph{any function} of the number $|C|$ of candidates; however, we bound the dimension of each block by a \emph{polynomial} in $|C|$.
Then we solve the standard $n$-fold IP via an algorithm of Hemmecke et al.~\cite{HemmeckeEtAl2013} whose runtime depends exponentially only on the largest dimension of each of its blocks.
That algorithm has a rather combinatorial flavour by traversing the problem's ``Graver bases''; cf. Hemmecke et al.~\cite{HemmeckeEtAl2013}.
This way, we substantially contribute towards resolving ``Challenge \#1'' by Bredereck et al.~\cite{BredereckEtAl2014}.

\subsection{Related Work}\label{sec:relatedwork}
Bribery problems in voting systems are well-studied~\cite{BredereckEtAl2014b,DornSchlotter2012,ElkindEtAl2009,FaliszewskiEtAl2011}.
Bredereck et al.~\cite{BredereckEtAl2014b} consider shift bribery, where candidates can be shifted up a number of positions in a voter's preference order; this is a special case of swap bribery.
An extension of their model~\cite{BredereckEtAl2016} allows campaign managers to affect the position of the preferred candidate in multiple votes, either positively or negatively, with a single bribery action, which applies to large-scale campaigns.
In a different model~\cite{BredereckEtAl2016b}, complexity of bribery of elections admitting for multiple winners, such as when committees are formed, has been studied.
Also, different cost models have been considered: Faliszewski et al.~\cite{FaliszewskiEtAl2009} require that each voter has their own price that is independent of the changes made to the bribed vote.
The more general models of Faliszewski~\cite{Faliszewski2008} and Faliszewski et al.~\cite{FaliszewskiEtAl2009b} allow for prices that depend on the amount of change the voter is asked for by the briber.
For various other bribery models that have been investigated algorithmically, cf. Rothe~\cite[Chapter 4.3.5]{BaumeisterRothe2016}.

Regarding ILPs, tractable fragments include ILPs whose defining matrix is totally unimodular (due to the integrality of the corresponding polyhedra and the polynomiality of linear programming), and ILPs in fixed dimension (due to the algorithms of Lenstra~\cite{Lenstra1983} and Kannan~\cite{Kannan1983}).
Courcelle's theorem~\cite{Courcelle1990} implies that solving ILPs is fixed-parameter tractable parameterized by the treewidth of the constraint matrix and the maximum domain size of the variables.
Ganian and Ordyniak~\cite{GanianOrdyniak2016} showed fixed-parameter tractability for the combined parameter the treedepth and the largest absolute value in the constraint matrix, and contrasted this with a $\mathsf{W}[1]$-hardness result when treedepth is exchanged for treewidth.
Recently, further combinatorial algorithms for special cases of $n$-fold IPs and related tree-fold IPs have been suggested~\cite{ChenMarx2018,KnopEtAl2017b}.

\paragraph{Organization.}
In Sect.~\ref{sec:computationalsocialchoiceproblems} we provide the necessary background on the problems that we solve.
Then in Sect.~\ref{sec:extendednfoldips} we define extended $n$-fold IPs, which later allows for easier problem modeling.
We do so in Sect.~\ref{sec:applications}, where we give extended $n$-fold IP formulations for several instantiations of the $\mathcal R$-{\sc Multi-Bribery} problem.
The complexity lower bounds and hardness results are given in Sect.~\ref{sec:lowerbounds}.
We conclude in Sect.~\ref{sec:conclusions}.

\section{Voting and Bribing Problems}
\label{sec:computationalsocialchoiceproblems}
We give notions for the problems we deal with; for background, we refer to the surveys of Brams and Fishburn~\cite{BramsFishburn2002}, and Faliszewski and Rothe~\cite{FaliszewskiRothe2015}.

\medskip
\noindent
\textbf{Elections.}
An election~$(C,V)$ consists of a set $C$ of candidates and a set~$V$ of voters, who indicate their preferences over the candidates in $C$.
There are many ways in which a voter's preferences can be modeled; throughout this paper we use a variant of the ordinal model, where each voter $v$'s preferences are represented via a \emph{preference order} $\pref_v$ which is a partial order over $C$.
We explicitly allow partial orders; we call an election \emph{complete} if all voters' preference orders are linear orders.
In some problems we study voters who indicate their preferences only for their ``top candidates''; we model this with ``truncated orders''.
For an integer $t\in\mathbb N$, a preference order~$\pref_v$ is \emph{$t$-top-truncated} if there is a permutation $\pi$ over $\{1,\hdots,|C|\}$ such that $\pref_v$ is of the form $c_{\pi(1)} \pref_v \cdots \pref_v c_{\pi(t)} \pref_v \left\{c_{\pi(t+1)},\hdots,c_{\pi(|C|)}\right\}$; that is, $v$ is indifferent among the members of the set $\left\{c_{\pi(t+1)},\hdots,c_{\pi(|C|)}\right\}$ which we call \emph{unranked candidates}; we refer to $\left\{c_{\pi(1)},\hdots,c_{\pi(t)}\right\}$ as to the \emph{ranked candidates}.
For a ranked candidate~$c$ we denote by $\textrm{rank}(c,v)$ their rank in~$\pref_v$; then $v$'s most preferred candidate has rank~1 and their least preferred candidate has rank $|C|$.
Also, for $t$-top-truncated preference orders~$\pref_v$ it holds that $\textrm{rank}(c,v) \leq t$ for all \emph{ranked} candidates $c\in C$.
If $\pref_v$ is not a linear order (thus, $\pref_v$ is a quasiorder), we replace it with any linear extension of $\pref_v$ and set the cost of swapping two candidates $c,c'$ to $0$ whenever $\textrm{rank}(c,v) = \textrm{rank}(c',v)$ in the original order.
Independently of voters, for the set of candidates $C$ we also refer to a linear order~$\pref_{C}$ over $C$ as to a \emph{ranking} of $C$ (i.e., ranking is a shorthand for a linear order on candidates).
For distinct candidates~$c,c'\in C$, we write $c\pref_v c'$ if voter~$v$ prefers~$c$ over~$c'$.
To simplify notation, we sometimes identify the candidate set~$C$ with the set $\{1,\hdots,|C|\}$, in particular when expressing permutations over $C$.
All studied problems designate a candidate in $C$; we always denote it by~$c^\star$.
We sometimes identify a voter $v$ with their preference order $\pref_v$, as long as no confusion arises.

Next, we describe the actions by which we perturb a given election $(C,V)$.
Applying a set $\Gamma$ of actions to $(C,V)$ yields a \emph{perturbed} election that we denote by $(C,V)^{\Gamma}$.
Performing an action incurs a cost; we specify these costs by functions that for each voter specify their individual cost of performing the action.

\medskip
\noindent
\textbf{Swaps.}
Let $(C,V)$ be an election and let $\pref_v\in V$ be a voter.
For candidates $c,c'\in C$, a \emph{swap} $s = (c,c')_v$ means to exchange the positions of $c$ and $c'$ in $\pref_v$; denote the perturbed order by $\pref_v^s$.
A swap~$(c,c')_v$ is \emph{admissible in $\pref_v$} if $\rank(c,v) = \rank(c',v)-1$.
A set $S$ of swaps is \emph{admissible in $\pref_v$} if they can be applied sequentially in~$\pref_v$, one after the other, in some order, such that each one of them is admissible.
Note that the perturbed vote, denoted by $\pref_v^S$, is independent from the order in which the swaps of $S$ are applied.
We also extend this notation for applying swaps in several votes and denote it $V^S$.
We specify $v$'s cost of swaps by a function $\sigma^v \colon C\times C\rightarrow \mathbb{Z}$.
A special case of swaps are ``shifts'', where we want to make $c^\star$ win the perturbed election by shifting them forward in some of the votes, at an appropriate cost, without exceeding a given budget.
Shifts can be modelled by swaps only involving $c^\star$.

\medskip
\noindent
\textbf{Push actions.}
Let $(C,V)$ be an election.
In certain voting rules, such as SP-AV or Fallback, each voter $v\in V$ additionally has an \textit{approval count} $a^v \in \{0,\hdots,|C|\}$.
Voter $v$'s approval count\footnote{See Scoring protocols for the definition.} $a^v$ indicates that they approve the top-ranked $a^v$ many candidates in their preferences order, and disapprove all others.
%notion of push action is due to Schlotter et al.~\cite{SchlotterEtAl2017}.
A ``push action'' can change a voter's approval count: formally, for voter $\pref_v\in V$ and $t \in \{-a^v,\ldots,|C|-a^v\}$, a \emph{push action}~$p^v = t$ means to change their approval count to $a^v+t$.
We specify the cost of push actions by a function $\pi^v \colon \left\{ -a^v,\ldots,|C|-a^v \right\} \to \mathbb{Z}$; we stipulate that $\pi^v(0)=0$.

If voter $v$ is involved in a swap or push action, a one-time \textit{influence cost} $\iota^v$ occurs.

\medskip
\noindent
\textbf{Control changes.}
Let $(C,V)$ be an election.
We partition the set $V$ into a set~$V_a$ of \emph{active} voters and a set~$V_\ell$ of \emph{latent} voters.
Only active voters participate in an election, but through a ``control change'' latent voters can become active or active voters can become latent.
(If no partition of $V$ into $V_a$ and $V_{\ell}$ is specified, then we implicitly assume that $V = V_a$.)

Formally, a \emph{control change} $\gamma$ activates some latent voters from~$V_\ell$ and deactivates some active voters from~$V_a$; denote the changed set of voters by $(V_\ell \cup V_a)^\gamma$.
We denote the cost of activating voter $v \in V_\ell$ by~$\alpha^v$ and the cost of deactivating voter $v \in V_a$ by $\delta^v$.

\medskip
\noindent
\textbf{Voting rules.}
A voting rule~$\mathcal R$ is a function that maps an election $(C,V)$ to a subset $W\subseteq C$, called the \emph{winners}.
We study the following voting rules:

\textit{Scoring protocols.}
A scoring protocol is defined through a vector $\ves = (s_1,\hdots,s_{|C|})$ of integers with \mbox{$s_1\geq \hdots\geq s_{|C|} \geq 0$}.
For each position $p\in\{1,\hdots,|C|\}$, value $s_p$ specifies the number of points that each candidate $c$ receives from each voter that ranks $c$ as $j^{\text{th}}$ best.
Any candidate with the maximum number of points is a winner.
Examples of scoring protocols include the Plurality rule with $\ves = (1,0,\ldots,0)$, the $d$-Approval rule with $\ves = (1,\ldots,1,0,\ldots,0)$ with $d$ ones, and the Borda rule with $\ves = (|C|-1, |C|-2, \ldots, 1, 0)$.
Throughout, we consider only \emph{natural} scoring protocols for which $s_1 \leq |C|$; this is the case for the aforementioned popular rules.

\textit{Bucklin.}
The \textit{Bucklin winning round} is the (unique) number $k$ such that using the $k$-approval rule yields a candidate with more than $\frac{n}{2}$ points, but the $(k-1)$-approval rule does not.
A \textit{Bucklin winner} is then any candidate with the maximum points (over all candidates) when the $k$-approval rule is applied.

\textit{Condorcet-consistent rules.}
A candidate~$c\in C$ is a \emph{Condorcet winner} if any other~$c'\in C \setminus \{c\}$ satisfies $|\{\pref_v \in V \mid c\pref_v c' \}| > |\{v \in V \mid c' \pref_v c\}|$.
A voting rule is \emph{Condorcet-consistent} if it selects the Condorcet winner in case there is one.
Fishburn~\cite{Fishburn1977} classified all Condorcet-consistent rules as C1, C2 or C3, depending on the kind of information needed to determine the winner.
For candidates $c,c' \in C$ let $v(c,c')$ be the number of voters who prefer $c$ over $c'$, that is, $v(c,c') = |\{\pref_v \in V \mid c \pref_v c'\}|$;
we write $c <_M c'$ if $c$ beats $c'$ in a head-to-head contest, that is, if $v(c,c') > v(c',c)$.
\begin{description}
    \item[C1:] \emph{$\mathcal R$ is C1} if knowing $<_M$ suffices to determine the winner, that is, for each pair of candidates $c,c'$ we know whether $v(c,c') > v(c',c), v(c,c') < v(c',c)$ or $v(c,c') = v(c',c)$.
    An example is the Copeland$^\alpha$ rule for $\alpha \in [0,1]$, which specifies that for each head-to-head contest between two distinct candidates, if some candidate is preferred by a majority of voters then they obtain one point and the other candidate obtains zero points, and if a tie occurs then both candidates obtain $\alpha$ points; the candidate with largest sum of points wins.
    %also: Slater rule
    \item[C2:] \emph{$\mathcal R$ is C2} if it is not C1 and knowing the \emph{exact value of} $v(c,c')$ for all $c,c'\in C$ suffices to determine the winner.
    Examples are the \emph{Maximin} rule which declares any candidate $c \in C$ a winner who maximizes $v_*(c) = \min\{v(c,c') \mid c' \in C\setminus\{c\}\}$; and the \emph{Kemeny} rule which declares any candidate $c \in C$ a winner for whom there exists a ranking $\pref_R$ of $C$ that ranks~$c$ first and maximizes the total agreement with voters
    \[
    \sum_{v\in V} \bigl| \left\{(c,c') \mid ((c\pref_R c') \Leftrightarrow (c \pref_v c')) \enspace \forall c,c' \in C \right\} \bigr|
    \]
    among all rankings.
    \item[C3:] \emph{$\mathcal R$ is C3} if it is neither C1 nor C2.
    Examples are the \emph{Dodgson} rule which declares any candidate $c\in C$ a winner for whom a minimum number of swaps make them the Condorcet winner of the manipulated election; and the \emph{Young} rule which declares any candidate $c\in C$ a winner for whom removing a minimum number of voters from the election makes $c$ the Condorcet winner of the perturbed election.
\end{description}

Additionally, if approval counts are given for each voter, other voting rules are possible:

\textit{SP-AV.} Each candidate $c$ receives a point from every voter $v$ with $\rank(c,v) \leq a^v$.
A candidate with maximum number of points wins the election.

\textit{Fallback.} Delete, for each voter $v$, their unranked candidates (i.e., all $c$ with $\rank(c,v) > a^v$) from its order.
Then, use the Bucklin rule, which might fail to determine a winner due to the deletion of unranked candidates; in that case, use the SP-AV rule.

At this point we can formally define the $\mathcal R$-{\sc Multi-Bribery} problem:\hfill\vspace{1.2mm}
\defparproblem{$\mathcal R$-Multi Bribery}
{$|C|$}
{A complete election $(C,V)$ with active voters $V_a$, latent voters $V_{\ell}$ and approval counts $a^v$ for $v\in V$, a designated candidate $c^\star\in C$, and swap costs $\sigma^v$ for $v\in V$, push action costs $\pi^v$ for $v\in V$, activation costs $\alpha^v$ for $v\in V_{\ell}$ and deactivation costs $\delta^v$ for $v\in V_a$, and a one-time influence cost $\iota^v$.\hfill\vspace{1.2mm}}
{Find a set $S$ of admissible swaps, a set $P$ of push actions, and a control change $\gamma$ of minimum cost so that~$c^\star$ wins the election $\left(C, \left(\left(\left( V_a\cup V_\ell \right)^\gamma \right)^{S}\right)^P\right)$ under rule $\mathcal R$.}

\section{Extended $n$-fold Integer Programming}
\label{sec:extendednfoldips}
In this section we discuss the class of $n$-fold IPs, and show how to enhance them to obtain ``extended'' $n$-fold IPs.

\subsection{\texorpdfstring{$\boldsymbol{n}$-fold}{n-fold} integer programs}
We begin by defining $n$-fold IPs.
For background, we refer to the books of Onn~\cite{Onn2010} and De Loera et al.~\cite{DeLoeraEtAl2013}.

Let $r,s,t,n$ be positive integers.
Given $nt$-dimensional integer vectors $\vew, \veb, \vel, \veu$, an $n$-fold IP problem $(IP)_{E^{(n)},\vew,\veb,\vel,\veu}$ in variable dimension~$nt$ is defined as
\begin{equation}
\min\left\{\vew\vex\,\mid\,\A\vex=\veb\,,\ \vel\leq\vex\leq\veu\,,\ \vex\in\Z^{nt}\right\},
\label{eq:standard_nfold}
\end{equation}
where
\begin{equation*}
%\quad \mbox{where }~E^{(n)}:=
E^{(n)}:= \left(
\begin{array}{cccc}
D      & D      & \cdots & D    \\
A      & 0      & \cdots & 0      \\
0      & A      & \cdots & 0      \\
\vdots & \vdots & \ddots & \vdots \\
0      & 0      & \cdots & A    \\
\end{array}
\right)
\end{equation*}
is an $(r+ns)\times nt$-matrix, $D \in \Z^{r \times t}$ is an $r\times t$-matrix and $A \in \Z^{s \times t}$ is an $s\times t$-matrix.
For numbers, vectors and matrices, we denote by $\langle \bullet \rangle$ the binary encoding length of an object.

Hemmecke et al.~\cite{HemmeckeEtAl2013} developed a dynamic program to show the following:
\begin{proposition}[{\cite[Thm. 6.2]{HemmeckeEtAl2013}}]
    \label{thm:nfold}
    There is an algorithm that, given $(IP)_{E^{(n)},\vew,\veb,\vel,\veu}$, in time $\maxcoef^{O(trs + t^2s)}\cdot O(n^3 \langle \vew, \veb, \vel, \veu \rangle)$ either
    \begin{enumerate}
    \item declares the program infeasible or unbounded or
    \item finds a minimizer of it;
    \end{enumerate}
    where $\maxcoef = \max\{\|D\|_\infty, \|A\|_\infty\}$.
\end{proposition}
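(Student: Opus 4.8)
The plan is to solve $(IP)_{\A,\vew,\veb,\vel,\veu}$ by \emph{iterative augmentation} along the Graver basis of $\A$. Recall the conformal (sign-compatible) partial order $\sqsubseteq$ on $\Z^{nt}$, where $\vey\sqsubseteq\vez$ means $y_iz_i\ge 0$ and $|y_i|\le|z_i|$ for all $i$; the Graver basis $\G(\A)$ is the set of $\sqsubseteq$-minimal nonzero integer vectors in $\ker(\A)$. Two structural facts drive the whole argument. First (local bound), the single-brick matrix $\binom{D}{A}$ has finitely many Graver elements, each of $\ell_\infty$-norm bounded by some $B=\maxcoef^{O(t)}$ via Cramer-type estimates on its circuits. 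Second (global type bound), the \emph{Graver complexity} of the pair $(A,D)$ is finite: there is a constant $g=g(A,D)$, independent of $n$, such that every element of $\G(\A)$ has at most $g$ nonzero bricks. Together these bound $\|\veg\|_\infty$ for every $\veg\in\G(\A)$ by a quantity depending only on $\maxcoef,r,s,t$.

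Next I would set up the augmentation framework. Starting from a feasible $\vex$, a \emph{Graver-best step} is a pair $(\veg,\lambda)$ with $\veg\in\G(\A)$, $\lambda\in\Z_{>0}$, $\vel\le\vex+\lambda\veg\le\veu$, maximizing the improvement $\vew\vex-\vew(\vex+\lambda\veg)$. The core subroutine computes such a step by a dynamic program over the $n$ bricks. Writing the augmenting direction as $\vez=(\vez^{(1)},\dots,\vez^{(n)})$ with each $\vez^{(i)}\in\Z^t$, the constraint $\A\vez=0$ says $A\vez^{(i)}=0$ for every $i$ and $\sum_i D\vez^{(i)}=0$. The dynamic program scans bricks $i=1,\dots,n$, and its state after brick $k$ is the partial sum $\sum_{i\le k}D\vez^{(i)}\in\Z^r$, which by the norm bound lies in a box of size $\maxcoef^{O(trs)}$; at each brick it selects $\vez^{(i)}$ from the bounded local Graver set (or $0$), tracking the best objective change for each reachable state. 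Since only $g=O(1)$ bricks are nonzero, the reachable state space together with the transition set has size $\maxcoef^{O(trs+t^2s)}$, and one sweep costs $\maxcoef^{O(trs+t^2s)}\cdot n$.

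I would then bound the number of augmentation steps. Decomposing an optimal improving direction conformally into Graver elements shows that the best such element already recovers a constant fraction of the remaining optimality gap, so Graver-best steps make geometric progress; the standard analysis then gives $\poly(n,\l\vew,\veb,\vel,\veu\r)$ steps, and careful accounting yields the claimed $O(n^3\l\vew,\veb,\vel,\veu\r)$ dependence once multiplied by the per-step cost. To obtain an initial feasible point, and simultaneously to detect infeasibility, I would run the same machinery on an auxiliary $n$-fold IP with artificial variables whose minimum is $0$ exactly when the original program is feasible; this auxiliary instance has the same block structure and the same coefficient bound $\maxcoef$, hence is solved within the same time bound. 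Unboundedness is flagged when a Graver-best step yields strict improvement with unbounded admissible step length.

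The main obstacle is the global type bound: proving that the Graver complexity $g(A,D)$ is finite and, crucially, \emph{independent of $n$}. This Santos--Sturmfels-style fact is exactly what keeps the dynamic program's state space $n$-independent; without it the number of nonzero bricks, and thus the state space, could grow with $n$, and the single-exponential-in-parameters, polynomial-in-$n$ running time would collapse. Everything else---the per-brick norm estimates and the bound on the number of augmentation steps---is comparatively routine once this structural result is in hand.
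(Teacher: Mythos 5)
The paper does not prove this proposition at all---it imports it verbatim from Hemmecke, Onn and Romanchuk~\cite[Thm.~6.2]{HemmeckeEtAl2013}---and your sketch faithfully reconstructs the architecture of that cited proof: the finite, $n$-independent Graver complexity of the pair $(A,D)$ in the Santos--Sturmfels/Ho\c{s}ten--Sullivant sense, a brick-by-brick dynamic program with partial sums $\sum_{i\le k} D\vez^{(i)}$ as states to compute a Graver-best augmentation step without ever writing down the Graver basis of $\A$, geometric progress of Graver-best augmentation yielding polynomially many steps, and a phase-I auxiliary $n$-fold program with artificial variables to find an initial feasible point or certify infeasibility/unboundedness. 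So your proposal is correct in outline and takes essentially the same route as the source result the paper relies on.
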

%Thus, the problem is fixed-parameter tractable parameterized by the dimensions of $A$ and $D$ and~$a$.
%Also, minimizing $p$-piecewise linear separable convex functions over $(IP)_{E^{n},\veb,\vel,\veu}$ is in \FPT.

The structure of $E^{(n)}$ allows us to divide the $nt$ variables into $n$ \textit{bricks} of size~$t$.
We use subscripts to index within a brick and superscripts to denote the index of the brick, i.e., $x_j^i$ is the $j$-th variable of the $i$-th brick with $j \in \{1, \ldots, t\}$ and $i \in \{1, \ldots, n\}$.

\subsection{Extended $n$-fold integer programs}
We now introduce a class of IPs that we call \emph{extended $n$-fold IPs}.
Our motivation for this is to enhance $n$-fold IPs with ``integer programming tricks'' that are well-known for general IPs; we want to show how to implement them while preserving the structure of $n$-fold IPs.
These tricks will make the application of $n$-fold IPs more convenient; they include introducing inequalities using slack variables, implementing logical connectives or the bool($\cdot$) operation.

That leads us to the following definition.
\begin{definition}
    Let $\vex = \left(x_1^1, \dots, x_t^1, \dots, x_1^n, \dots, x_t^n\right)$ be an $nt$-dimensional vector of integer variables.
    Let $B \in \Z$, $(b^1, \dots, b^n) \in \Z^n$ and $a=(a_1, \dots, a_t) \in \Z^t$. We say that
    \begin{align*}
    \sum_{i=1}^n \sum_{j=1}^t a_j x_j^i &= B
    \end{align*}
    is a \emph{globally uniform constraint}, and that
    \begin{align*}
    \sum_{j=1}^t a_j x_j^i &= b^i, \quad i=1, \dots, n
    \end{align*}
    is a \emph{locally uniform constraint}.
    We call the respective left hand sides \emph{globally uniform expressions} and \emph{locally uniform expressions}.
    We stress that the coefficients $a_j$ are the same regardless of the index $i$.
\end{definition}
Observe that every $n$-fold IP consists of box constraints (i.e., lower and upper bounds represented by vectors $\vel$ and $\veu$), a weight vector~$\vew$, and collections of globally and locally uniform constraints.
From now on we call the problem \eqref{eq:standard_nfold} an \emph{$n$-fold IP in standard form}.
We say that expression is a \emph{uniform expression} if it is either a globally uniform expression or a locally uniform expression.
We say that a uniform expression is a \emph{binary expression} if it is guaranteed to take only values in $\{ 0,1 \}$.

\begin{definition}
    An \emph{extended $n$-fold IP} is a collection of locally and globally uniform constraints which are additionally allowed to contain
    \begin{itemize}
        \item inequalities $<, \leq, >, \geq$,
        \item negation ($\neg$) and logical dijcunction $\lor$ with standard interpretation if applied to binary expressions and $\mbox{undefined}$ otherwise.
    \end{itemize}
    In addition to that we introduce two collections of operations $\bool_m(\cdot), \sgn_m(\cdot)$ for every positive integer $m$:
        \[
        \bool_m(x) = \begin{cases}
        0,~&\mbox{if}~x = 0,\\
        1,~&\mbox{if}~x \neq 0 \mbox{ and }-m \leq x \leq m,\\
        \mbox{undefined},~&\mbox{otherwise};
        \end{cases}
        \]
        \[
        \sgn_m(x) = \begin{cases}
        0,~&\mbox{if}~x = 0,\\
        1,~&\mbox{if}~1 \leq x \leq m,\\
        -1,~&\mbox{if}~-m \leq x \leq -1,\\
        \mbox{undefined},~&\mbox{otherwise} \enspace .
        \end{cases}
        \]
    Note that the constraints are still required to be uniform.
    An extended $n$-fold IP is \emph{valid} if the result of all of its constraints are defined (i.e. no expression used during the construction results in $\mbox{undefined}$).
    The two important parameters of a valid extended $n$-fold IP are
    \begin{description}
    \item[\emph{extended width}] is the number of inequalities, logical operations, $\bool_m(\cdot)$, and $\sgn_m(\cdot)$ operations (counting them only once for all $n$ locally uniform constraints or expressions); and
    \item[\emph{height}] is the maximum $m$ occurring in any of its $\bool_m$ and $\sgn_m$ operations.
    \end{description}
\end{definition}

The notion of height applies naturally also to expressions and constraints: the \emph{height of an expression} is the maximum $m$ appearing in any $\bool_m$ or $\sgn_m$ operation contained in it, and the \emph{height of a constraint} is the height of the expression on the left hand side.
Observe that if $\bool_m(e)$ is defined for a given expression $e$ (i.e., $-m\le e\le m$), then $\bool_{m+1}$ is defined.
Thus, we may actually require all parameters $m$ in the above definition to be the same value.

\begin{observation}
Locally uniform constraints can be of the form $x \neq_m y$.
\end{observation}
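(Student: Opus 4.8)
The plan is to realize the relation $x \neq_m y$ as the single locally uniform constraint $\bool_m(x-y) = 1$, where $x$ and $y$ are understood to be locally uniform expressions with $|x-y| \le m$ guaranteed. In other words, $\neq_m$ is treated as syntactic sugar for an equality constraint built from the already-permitted $\bool_m(\cdot)$ operation.

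First I would observe that the difference of two locally uniform expressions is again locally uniform: if $x = \sum_{j=1}^t a_j x_j^i$ and $y = \sum_{j=1}^t a_j' x_j^i$ with coefficients independent of the brick index $i$, then $x - y = \sum_{j=1}^t (a_j - a_j') x_j^i$ has exactly the same form, so it is an admissible argument for the $\bool_m(\cdot)$ operation. Next I would unwind the definition of $\bool_m$. Writing $e := x - y$, we have $\bool_m(e) = 0$ precisely when $e = 0$, and $\bool_m(e) = 1$ precisely when $e \neq 0$ and $-m \le e \le m$ (and undefined otherwise). Hence imposing the equality $\bool_m(e) = 1$ forces $e \neq 0$, that is $x \neq y$; and it is defined exactly on the range $-m \le x - y \le m$, which is what the subscript $m$ in $\neq_m$ records. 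Since $\bool_m$ applied to a locally uniform expression, set equal to the constant $1$, is one of the constructs explicitly permitted inside locally uniform constraints of an extended $n$-fold IP, the constraint $\bool_m(x-y) = 1$ is itself a valid locally uniform constraint of height $m$, contributing one $\bool_m$ operation to the extended width.

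The only points needing care---and the closest thing to an obstacle here---are the two bookkeeping conditions that keep the extended $n$-fold IP well-formed. Uniformity is preserved because the coefficients $a_j - a_j'$ do not depend on the brick index $i$, so one and the same constraint applies simultaneously to all $n$ bricks. Validity is preserved provided the height $m$ dominates $|x-y|$ over the whole feasible region, so that the $\bool_m$ never evaluates to undefined; this is exactly the promise encoded by the notation $\neq_m$. Both conditions follow immediately from the definitions, so beyond recording them no further work is required.
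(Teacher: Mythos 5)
Your proposal is correct and matches the paper's own proof exactly: the paper also treats $x \neq_m y$ as shorthand for the constraint $\bool_m(x-y)=1$. Your additional checks on uniformity of the difference and on validity of the height bound $m$ are sound elaborations of what the paper leaves implicit.
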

\begin{proof}
The expression $x \neq_m y$ is in fact a shorthand for $\bool_m(x-y)=1$.
\end{proof}
\begin{observation}[Folklore; see e.g.~\cite{Enderton1972}]
It is possible to use all logical connectives (i.e., ${\{\land, \ldots \}}$) with binary expressions in an extended $n$-fold IP.
\end{observation}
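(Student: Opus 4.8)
The plan is to reduce the claim to the classical fact that $\{\neg, \lor\}$ is a functionally complete set of propositional connectives, which is precisely the content of the cited folklore result of Enderton~\cite{Enderton1972}. Since the definition of an extended $n$-fold IP already grants us negation and disjunction on binary expressions, it suffices to exhibit, for each remaining connective, a formula over $\{\neg,\lor\}$ that computes it, and then to check that substituting binary uniform expressions into such a formula again yields a binary uniform expression all of whose sub-expressions remain defined.

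First I would record the standard De Morgan-style rewritings. For binary expressions $x,y$ I would define $x \land y := \neg(\neg x \lor \neg y)$, $x \Rightarrow y := \neg x \lor y$, and $x \Leftrightarrow y := (x \Rightarrow y) \land (y \Rightarrow x)$; the remaining connectives (NAND, NOR, XOR, and so on) are obtained in the same fashion, as every Boolean function on finitely many inputs is expressible over $\{\neg,\lor\}$. Because these are literally the truth-functional definitions of the connectives, correctness of each equivalence follows from a two-line truth-table check and needs no further argument.

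The one point that genuinely requires verification is that the framework's well-definedness conditions are preserved under these substitutions, and this is where I expect the only real care to be needed. I would argue by induction over the structure of the rewriting formula: if $x$ and $y$ are binary expressions, then $\neg x$ takes values in $\{0,1\}$ and hence is binary, and $x \lor y$ takes values in $\{0,1\}$ and hence is binary; therefore every sub-expression built up by the formulas above is binary, so each $\neg$ and each $\lor$ is applied only to binary arguments and never evaluates to $\mbox{undefined}$. This is the only potential pitfall, since $\neg$ and $\lor$ are $\mbox{undefined}$ on non-binary inputs, and the induction shows the difficulty does not arise.

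Finally, I would observe that uniformity is inherited for free: the logical operations combine expressions without disturbing the per-brick coefficient pattern, so a combination of locally uniform binary expressions is again a locally uniform expression in the sense of the definition, and likewise in the globally uniform case. I therefore do not anticipate any substantive obstacle; beyond invoking functional completeness, the proof amounts to the binary-type bookkeeping of the previous paragraph, which certifies that the resulting extended $n$-fold IP is \emph{valid}.
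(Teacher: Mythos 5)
Your proposal is correct and matches the paper's intended argument: the paper states this observation as folklore (citing Enderton precisely for the functional completeness of $\{\neg,\lor\}$) and, in its own rewriting rules and the demonstration of Sect.~3.4, uses exactly the De Morgan substitution $\neg(v \wedge u) = \neg v \vee \neg u$ that you propose. Your additional bookkeeping --- checking that substituted sub-expressions stay binary (hence defined) and uniform --- is the right thing to verify and is implicitly what the paper relies on, so there is no gap and no substantive difference in approach.
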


The following theorem deals with how many auxiliary variables and constraints are needed to convert an extended $n$-fold IP into standard form.
\begin{theorem}
    \label{thm:extendednfold}
    Let $I$ be a valid extended $n$-fold IP with  $nt$ variables, $r$ globally uniform constraints, $s$ locally uniform constraints, largest absolute coefficient value $\maxcoef$, extended width~$w$ and height $M$.
    There is an algorithm that, given $I$, in time $\Oh(ntw(r+s))$ constructs a standard $n$-fold IP $I'$ with $nt'$ variables, $r'$ globally uniform constraints, $s'$ locally uniform constraints and largest absolute coefficient value $\maxcoef'$ such that
    \begin{itemize}
        \item $t' = t + \Oh(w)$,
        \item $r' = r$,
        \item $s' = s + \Oh(w)$, and,
        \item $\maxcoef' = \max(\maxcoef, M)$.
    \end{itemize}
    Thus, $I$ can be solved in time $(\maxcoef')^\omega \cdot n^3L$ with $\omega=\Oh\big((t+w)(s+w)r + (t+w)^2(s+w)\big)$ and $L = \langle \vew, \veb, \vel, \veu \rangle$.
\end{theorem}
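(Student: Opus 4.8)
The goal is to show that each of the "convenience" features permitted in an extended $n$-fold IP can be simulated by introducing a constant number (independent of $n$) of auxiliary variables and locally uniform constraints per feature, while keeping the global constraint count fixed and the largest coefficient bounded by $\max(\maxcoef,M)$. The plan is to process the extended IP feature-by-feature: I would introduce one fresh variable per occurrence of an inequality, logical connective, $\bool_m$, or $\sgn_m$ operation, add the locally uniform constraints that force this fresh variable to equal the value of the corresponding sub-expression, and then substitute the fresh variable back into the constraint that used it. Since every occurrence is counted once across all $n$ bricks (this is exactly how extended width $w$ is defined), each gadget is replicated uniformly across all $n$ bricks, so the block structure of the $n$-fold matrix is preserved and the brick size grows by $\Oh(1)$ per feature, giving $t' = t + \Oh(w)$ and $s' = s + \Oh(w)$.

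\textbf{Gadgets.} First I would handle inequalities: an inequality $e \le b$ becomes $e + z = b$ with a fresh slack variable $z$ bounded in $[0,\infty)$ (appropriately capped using the box constraints), adding one variable and one locally uniform equation. Strict inequalities over integers reduce to non-strict ones by shifting $b$ by one. Next, for the operations $\bool_m$ and $\sgn_m$: to encode $y = \bool_m(e)$ I would introduce $y\in\{0,1\}$ together with auxiliary variables and the constraints that enforce $y=0 \Leftrightarrow e=0$ and $y=1 \Leftrightarrow e\neq 0$ over the guaranteed domain $-m \le e \le m$; the standard trick uses the bounds $-my \le e \le my$ together with a witness that $e$ is nonzero when $y=1$, which introduces coefficients of magnitude at most $m \le M$. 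The operator $\sgn_m$ is encoded analogously by splitting into a positive and negative part, each bounded by $m$. Negation $\neg y$ of a binary expression is simply $1-y$, a linear substitution, and by the folklore observation all logical connectives reduce to $\neg$ and $\lor$, where $\lor$ of binary expressions $y_1,\dots,y_k$ is encoded by a fresh binary $z$ with $z \le y_1+\cdots+y_k \le kz$, again a locally uniform gadget with bounded coefficients.

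\textbf{Bookkeeping and the main obstacle.} After replacing all $w$ features, the new brick size is $t' = t + \Oh(w)$ and the locally uniform constraint count is $s' = s + \Oh(w)$; crucially no feature touches the globally uniform part, so $r' = r$. Every gadget coefficient introduced is $1$ or bounded by $m \le M$, so $\maxcoef' = \max(\maxcoef, M)$ as claimed. The construction time is $\Oh(ntw(r+s))$ because we rewrite each of the $r+s$ constraint types across each of the $n$ bricks, touching $\Oh(tw)$ entries per constraint. The final running-time bound then follows by plugging $t',s',r',\maxcoef'$ into Proposition~\ref{thm:nfold}: the exponent of $\maxcoef'$ becomes $\omega=\Oh\big(t'r's' + t'^2 s'\big) = \Oh\big((t+w)(s+w)r + (t+w)^2(s+w)\big)$, and the polynomial factor is $n^3 L$ with $L=\langle\vew,\veb,\vel,\veu\rangle$. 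The step I expect to require the most care is the correctness of the $\bool_m$ and $\sgn_m$ gadgets: I must verify that the linear constraints force the fresh variable to the intended value for \emph{every} feasible assignment within the guaranteed domain (exploiting that the extended IP is \emph{valid}, so the argument always lies in $[-m,m]$ and the "undefined" branch never arises), and that these gadgets are genuinely \emph{locally} uniform — i.e., that the same coefficient pattern applies in every brick $i$ — so that the resulting matrix retains the $E^{(n)}$ form rather than degenerating into an unstructured IP.
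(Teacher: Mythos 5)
Your overall strategy --- rewriting each inequality, logical connective, $\bool_m$ and $\sgn_m$ feature into auxiliary variables plus locally uniform gadget constraints, bookkeeping the per-feature increases $\Delta t$ and $\Delta s$, and then invoking Proposition~\ref{thm:nfold} on the resulting standard $n$-fold IP --- is essentially the paper's proof. Your gadgets are also equivalent in spirit to the paper's: it encodes $\bool_m(f)$ and $\sgn_m(f)$ via two indicator bits $v_e,u_e$ with $1+x_f \le Uv_e \le U+x_f$ and $1-x_f \le Lu_e \le L-x_f$, then sets $\bool = \neg(v_e \wedge u_e)$ and $\sgn = v_e - u_e$, which matches your positive/negative-part sketch.

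There is, however, one genuine gap: your claim that \emph{``crucially no feature touches the globally uniform part, so $r'=r$''} is false, and it glosses over the only step where the $n$-fold block structure creates a real obstacle. Globally uniform constraints may themselves be inequalities --- indeed, in every application in Sect.~\ref{sec:voting_rules} the winning conditions are globally uniform inequalities such as $\sum_{v\in V} \tau^{va}_c < \sum_{v\in V} \tau^{va}_{c^\star}$. You cannot introduce ``one fresh slack variable'' for such a constraint: in the $E^{(n)}$ format every variable belongs to some brick, and the column pattern of a brick (its $D$-part and $A$-part) must be identical across all $n$ bricks, so any slack column you add is automatically replicated $n$ times and the global row receives $\sum_{i=1}^n s^i$ rather than a single slack. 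The paper handles this with a dedicated rule: it adds $n$ slack variables, one per brick, and exploits the fact that the box constraints (unlike $D$ and $A$) may differ across bricks, setting $l^i=u^i=0$ for all $i>1$ so that only the first copy acts as a slack; this is what actually keeps $r'=r$. Your per-brick replication principle could be patched to do exactly this, but as written the proposal never says where the slack of a global inequality lives, and the justification you give for $r'=r$ would not survive that scrutiny.
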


\paragraph{An example of an extended $n$-fold IP}
We want $m$ variables $x_1, \ldots, x_m$ in each brick describing a permutation of $\{1,\hdots,m\}$.
That is equivalent to imposing $\sum_{j=1}^m x_j = \binom{m+1}{2}$, and $x_j \neq x_k$ for all $j \neq k$, and $x_j \in \{1,\hdots,m\}$ for all $j$.
Note that in order to use a non-equality constraint ``$e \neq f$'' we need to determine the largest possible difference $|e|-|f|$.
In this case it is $m$, and thus we express the aforementioned conditions by the following $1 + \binom{m}{2}$ locally uniform constraints:
\begin{align}
\label{eqn:permut_start}
\sum_{j=1}^m x_j &= \binom{m+1}{2} \\
x_j &\neq_m x_k &\mbox{ for all } j \not = k, j \in [m] \label{eqn:permut_neq}.
\end{align}

Further, given $n$ permutations $o^i_1, \ldots, o^i_m$ for $i = 1,\hdots,n$, one for each brick.
Now we want to compare the $i$-th brick permutation $x^i_1, \ldots x^i_m$ to $o^i_1, \ldots, o^i_m$ and determine which indices are inverted, that is, $x_j < x_k$ if and only if $o_j > o_k$.
In other words, we want to determine when the sign of $(x^i_j - x^i_k)$ equals the sign of $(o^i_k - o^i_j)$.
So, for each $j \neq k$ we add a new binary indicator variable $s_{jk}$ as follows:
\begin{equation}
\label{eqn:inverted_pairs}
s_{jk} = \bool_2\left(\sgn_m\left( x^i_j - x^i_k \right) = \sgn_m\left( o^i_k - o^i_j \right)\right) \,.
\end{equation}
Notice that $\sgn_m(o^i_k - o^i_j)$ is a constant $o_{kj}^i$, so \eqref{eqn:inverted_pairs} turns to $\bool_2\left(\sgn_m(x^i_j - x^i_k) - o^i_{kj}\right)$; this is now clearly a locally uniform expression.

\iffalse
An example of what is \textit{not} expressible in the extended $n$-fold IP format is the (nonsensical) expression $\sum_{j=1}^m o^i_j x_j$.
This is not a uniform expression since the numbers $o^i_j$ appear not as additive constants but as coefficients.
However, coefficients are required to be identical across bricks.\dkcom{I do not understand this}
\fi

\subsection{Proof of Theorem~\ref{thm:extendednfold}}
\iffalse
Before we proceed to the proof we introduce some more notation for extended $n$-fold IPs.
\dkcom{I am actually not sure whether we need all of this}
We say that an operation $o$ \emph{contains} an expression $e$ if it is of form $o(e)$.
A constraint \emph{contains} an operation if it is used in it.
A constraint $c$ \emph{contains} an expression if it is used directly in $c$ (that is, not only used in operations used in $c$).
Let $I$ be a valid extended $n$-fold IP instance.
The \emph{dependency graph} of $I$ is a directed graph whose vertices are the expressions, operations, and constraint of $I$ and there is an arc $ef$ (from $e$ to $f$) if $e$ is contained in $f$.
Note that the dependency graph of $I$ is a directed acyclic graph (DAG).
It is worth noting that the number of vertices of the dependency graph is exactly the extended height of $I$.
\fi

\begin{proof}[Proof of Theorem~\ref{thm:extendednfold}]
We will prove Theorem~\ref{thm:extendednfold} by exhaustively applying a set of rewriting rules.
These rewriting rules are applied to expressions or constraints containing logical operations, the $\bool_m(\cdot)$ or $\sgn_m(\cdot)$ operations, or inequalities.
For simplicity, we will use the letters $e,f$ and $g$ for expressions as if they were single variables, and analogously $e \heartsuit f$ with $\heartsuit \in \{<, \leq, >, \geq\}$ for constraints.
This is without loss of generality, as we always create an auxiliary variable $x_e$ and assign an expression $e$ to it by adding an auxiliary constraint $x_e=e$; similarly for constraints.
A rewriting rule can be applied anytime it operands are already represented by such an variable.

To determine the parameters $r',s'$, and $t'$ of the resulting $n$-fold IP in standard form, we consider the ``$s$-increase'' $\Delta s(e)$ of an expression $e$, which is the number of auxiliary equalities required to rewrite $e$ into the standard form.
Similarly, the ``$t$-increase'' $\Delta t(e)$ of $e$ is the number of auxiliary variables needed to express~$e$, and analogously for globally uniform constraints and $\Delta r(e)$.
We note the $s$- and $t$-increase of each rule after defining it.

\medskip
\noindent
\textbf{Rewriting a locally uniform expression $e$ to the standard format.}
Rewriting a locally uniform expression~$e$ in some locally uniform constraint means replacing it with a new variable $z$ and adding auxiliary locally uniform constraints.
These constraints assure that the variable~$z$ will carry the desired meaning.
The result is that every expression $e$ is rewritten to the standard format.
In this phase we may still be adding constraints that are not in the standard format (contain inequalities etc.) as they will be dealt with by subsequent applications of the rewriting rules.
\begin{itemize}
    \item $e::= f \vee g \quad \Rightarrow \quad$ create a new binary variable $s$ and set $2x_e = x_f + x_g + s_e$. \\
    Then $\Delta s(e) = \Delta s(f) + \Delta s(g) + 1$, $\Delta t(e) = \Delta t(f) + \Delta t(g) + 2$.
    \item $e ::= \neg f \quad \Rightarrow \quad x_z = 1 - x_f$. \\
    Then $\Delta s(e) = \Delta s(f) + 1$, $\Delta t(e) = \Delta t(f) + 1$.
    \item $e ::= \bool_m(f)$ and $e ::= \sgn_m(f)$: We have that $-L < f < U$ for $L,U \in \N$ and $m \le \max \{L,U\}$, since $I$ is valid.
    The subscript $m$ in $\bool_m(e)$ signifies that we need to introduce coefficients (upperbounded by $m$) into the (new) system.
    Let $v_e,u_e \in \{0,1\}$ be two new binary variables such that $v_e=1$ if and only if $x_f \geq 0$, and $u_e=1$ if and only if $x_f \leq 0$:
    \begin{align*}
    1 + x_f &\leq Uv_e \leq U + x_f \\
    1 - x_f &\leq Lu_e \leq L - x_f \\
    v_e, u_e &\in \{0,1\}
    \end{align*}
    Now if $e ::= \bool_m(f)$ we let $x_z = \neg(v_e \wedge u_e)$ and if $e ::= \sgn_m(f)$ we let $x_z = v_e - u_e$. \\
    Then $\Delta s(e) = \Delta(f) + O(1)$, $\Delta t(e) = \Delta t(f) + O(1)$ and analogously for $e ::= \sgn_m(f)$.
    \item $e ::= \bool_m(f \heartsuit g) \quad \Rightarrow \quad$
    \begin{itemize}
        \item $\heartsuit$ is ``='': $x_e = \neg\bool_m(x_f - x_g)$
        \item $\heartsuit$ is ``$>$'': $x_e = \bool_m(\sgn_m(x_f - x_g) = 1)$
        \item $\heartsuit$ is ``$\geq$'': $x_e = \bool_m(x_f > x_g) \vee \bool_m(x_f = x_g)$
    \end{itemize}
    And analogously when $\heartsuit$ is ``$<$'' or ``$\leq$''. \\
    Then $\Delta s(e) = \Delta s(f) + \Delta s(g) + O(1)$ and $\Delta t(e) = \Delta t(f) + \Delta t(g) + O(1)$, since the overhead of the used operations is only $O(1)$.
\end{itemize}

\noindent
\textbf{Rewriting a locally uniform constraint $e ::= f \heartsuit g$ to standard format.}
\begin{itemize}
    \item when $\heartsuit$ is ``='': $e = f \quad \Rightarrow \qquad x_e - x_f = 0$.\\
    Then $\Delta s(e=f) = \Delta s(e) + \Delta s(f)$ and $\Delta t(e=f) = \Delta t(e) + \Delta t(f)$.
    \item when $\heartsuit$ is not ``='', intuitively we want to add a slack variable: $e \heartsuit f \quad \Rightarrow \qquad x_e - x_f + s_e = 0$ with the variable $s_e$ having an upper bound $u^s$ and a lower bound $l^s$ set as follows:
    \begin{itemize}
        \item $l^s = 0$ and $u^s = Q_e$ when $\heartsuit$ is ``$\leq$'',
        \item $l^s = 1$ and $u^s = Q_e$ when $\heartsuit$ is ``$<$'',
        \item $l^s = -Q_e$ and $u^s = 0$ when $\heartsuit$ is ``$\geq$'', and
        \item $l^s = -Q_e$ and $u^s = -1$ when $\heartsuit$ is ``$>$'';
    \end{itemize}
    where $Q_e = \max\{\|\vel\|_\infty, \|\veu\|_\infty\}\cdot \maxcoef nt$ stands for a sufficiently large number. \\
    Then, $\Delta s(e \heartsuit f) = \Delta s(e) + \Delta s(f)$ and $\Delta t(e \heartsuit f) = \Delta t(e) + \Delta t(f) + 1$.
\end{itemize}

\noindent\textbf{Rewriting a globally uniform constraint $e ::= f \heartsuit g$ to standard format.}
First, we rewrite any logical operations and $\bool_m$ and $\sgn_m$ operations in $e$ and $f$ using the same rules as above, that is, by adding auxiliary variables and locally uniform constraints.
Then, what remains is to deal with the inequalities $\heartsuit \in \{<, \leq, >, \geq\}$.
We use slack variables as before, but since we cannot add just one variable without breaking the $n$-fold format, we instead add $n$ new variables and ``disable'' all but one of them using the lower and upper bounds:\\
$e \heartsuit f \quad \Rightarrow \qquad x_e - x_f + \sum_{i=1}^n s_{e \heartsuit f}^i = 0$ with $s_{e \heartsuit f}^i$ for $i = 1,\hdots,n$ being $n$ new variables with lower and upper bounds $l^i = u^i = 0$ for $1 < i \le n$ and with
\begin{itemize}
    \item $l^1 = 0$ and $u^1 = \infty$ when $\heartsuit$ is ``$\leq$'',
    \item $l^1 = 1$ and $u^1 = \infty$ when $\heartsuit$ is ``$<$'',
    \item $l^1 = -\infty$ and $u^1 = 0$ when $\heartsuit$ is ``$\geq$'', and
    \item $l^1 = -\infty$ and $u^1 = -1$ when $\heartsuit$ is ``$>$''.
\end{itemize}
Then, $\Delta t(e \heartsuit f) = \Delta t(e) + \Delta t(f)$ if $\heartsuit$ is ``$=$'' and $\Delta t(e \heartsuit f) = \Delta t(e) + \Delta t(f) + 1$ otherwise.

It remains to compute the parameters $r',s', t'$ and $\maxcoef'$.
Let $\mathcal{L}$ be the set of locally uniform constraints and~$\mathcal{G}$ be the set of globally uniform constraints.
Then
\begin{itemize}
    \item $t' = t + \sum_{e \heartsuit f \in \mathcal{L}} \Delta t(e \heartsuit f) + \sum_{e \heartsuit f \in \mathcal{G}} \Delta t(e \heartsuit f) = t + \Oh(w)$,
    \item $s' = s + \sum_{e \heartsuit f \in \mathcal{L}} \Delta s(e \heartsuit f) = s + \Oh(w)$,
    \item $r' = r$, since we have merely added slack variables,
    \item all coefficients are bounded in absolute value by $\maxcoef'=\max\{\maxcoef,M\}$ since we have only introduced new large coefficients via the $\bool_m$ and $\sgn_m$ operations and those are upper bounded by $M$.
\end{itemize}
This concludes the proof of Theorem~\ref{thm:extendednfold}.
\end{proof}

\subsection{A demonstration of the rewriting process}
\label{sec:demonstration}
In this section we demonstrate the proof of Theorem~\ref{thm:extendednfold} on the constraints~\eqref{eqn:permut_start} and \eqref{eqn:permut_neq}.

Refer to Figure~\ref{fig:expressionRewriting}.
The first constraint is already in the standard format.
However, for the other constraints, the rewriting rules are applied.
Fix $j,k$.
The resulting standard $n$-fold IP will contain the following constraints (for brevity we omit rewriting inequalities by slack variables as this is standard):
\begin{eqnarray*}
    w &    = & x_j - x_k \\
    1 + w & \leq & mv \leq m + w \\
    1 - w & \leq & mu \leq m - w \qquad\qquad \textrm{express } z_{\neg\bool} = \neg(v \wedge u) = \neg v \vee \neg u\\
    v_\neg &   = & 1 - v \bigwedge u_\neg = 1 - u \qquad   v, u, s \in \{0,1\}\\
    2z_\vee &   = & v_\neg + u_\neg + s \\
    z_{\neg\bool} &   = & 1 - z_\vee \qquad\qquad\quad \textrm{and set } \neg\bool(x_j - x_k) = 0 \\
    z_{\neg\bool} &   = & 0 \enspace .
\end{eqnarray*}

\begin{figure}[bt]
  \includegraphics{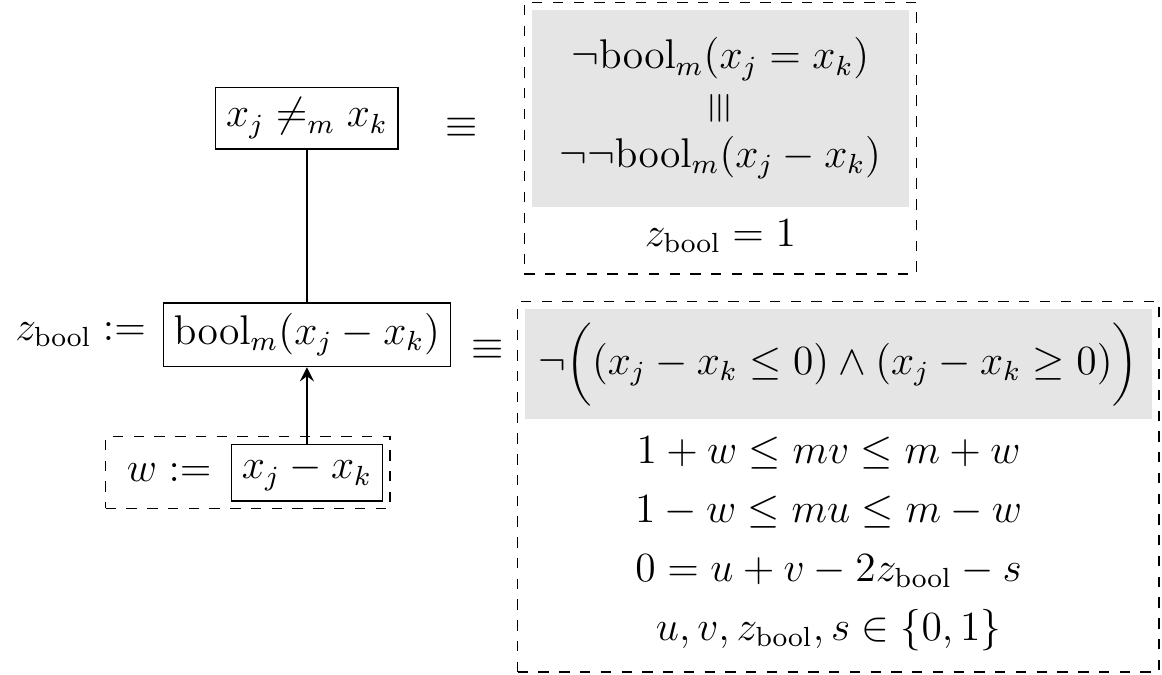}
  \caption{\label{fig:expressionRewriting}%
  An illustration of the rewriting process for \eqref{eqn:permut_neq}.
  An edge is between equivalent expressions, while an arrow is pointing towards the depending expression.
  }
\end{figure}

\paragraph{Remark.}
Naturally, we ask if the $\bool()$ operation can be implemented \textit{without} introducing a number~$\maxcoef$ depending on the lower and upper bounds, as $\maxcoef$ becomes the base of the run time in Theorem~\ref{thm:nfold}.
One can show that such dependence is necessary (proof is deferred to Section~\ref{sec:binpacking_nfold}):
\begin{lemma}
    \label{lem:nfold_bool}
    Unless $\FPT = \mathsf{W}[1]$, the $\bool()$ operation cannot be expressed in $n$-fold IP format by introducing only $f(k)$ new variables and coefficients bounded by $f(k)$, for any computable function $f$ and $k = \max\{r,s,t\}$.
    Moreover, binary $n$-fold IP is weakly NP-hard even when $r=t = 1$ and $s=0$.
\end{lemma}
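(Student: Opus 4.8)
The plan is to prove both statements by contraposition, exploiting the shape of the running time in Proposition~\ref{thm:nfold}: the algorithm there runs in time $\maxcoef^{O(trs+t^2s)}\cdot\mathrm{poly}(n)$, so its only super-polynomial dependence is on the block dimensions $r,s,t$ and on the largest coefficient $\maxcoef$. Consequently, any $n$-fold encoding of a hard problem in which all of $r,s,t$ and $\maxcoef$ stay bounded by a function of the parameter would immediately yield a fixed-parameter (or even polynomial-time) algorithm. The whole point is thus to exhibit a hard problem whose $n$-fold formulation keeps everything small \emph{except} the quantity that an honest $\bool$-implementation is forced to blow up.

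For the first claim I would start from a problem that is $\W{1}$-hard yet admits a natural extended $n$-fold formulation, a convenient choice being \textsc{Unary Bin Packing} parameterized by the number $k$ of bins (whose $\W{1}$-hardness is known). First I would write this as an extended $n$-fold IP with one brick per item: locally uniform constraints assign each item to exactly one of the $k$ bins, and the capacity test for each bin is expressed through $\bool_M$ and $\sgn_M$ comparisons of the partial bin loads. The decisive design goal is that the block dimensions $t,s,r$ are all $O(k)$, and that every genuinely large number of the instance (the item sizes and the resulting load sums, which may be of size $\mathrm{poly}(n)$) occurs \emph{only} as the height $M$ of a $\bool$ or $\sgn$ operation, never as an actual matrix coefficient. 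Now suppose, for contradiction, that $\bool(\cdot)$ could be implemented in $n$-fold format using only $f(k)$ fresh variables and coefficients bounded by $f(k)$. Using such an implementation in place of the $\bool$-rewriting of Theorem~\ref{thm:extendednfold} converts the formulation into a standard $n$-fold IP whose parameters $t',s',r'$ and $\maxcoef'$ are all bounded by a function of $k=\max\{r,s,t\}$. Proposition~\ref{thm:nfold} would then solve it in time $f(k)\cdot\mathrm{poly}(n)$, placing \textsc{Unary Bin Packing} in $\FPT$ and forcing $\FPT=\W{1}$.

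For the ``moreover'' part I would give a direct reduction from \textsc{Subset Sum}, which is (weakly) $\NP$-hard. Encoding the $n$ input numbers $a_1,\dots,a_n$ as the per-brick data of an $n$-fold IP with a single brick variable each ($t=1$), a single global row ($r=1$) and no local constraints ($s=0$), one asks for a binary solution attaining a prescribed target value; the numbers $a_i$ enter only through the binary-encoded coefficients/weights, so $\maxcoef$ is the sole large quantity while the block dimensions are constant. Hardness therefore cannot be charged to $r,s,t$, only to the magnitude of $\maxcoef$. Read against the running time of Proposition~\ref{thm:nfold}, this shows that no encoding can keep $\maxcoef$ bounded by a function of the block dimensions, which is precisely the obstruction that forces the height $M$ into the coefficients whenever a $\bool$ operation is compiled away.

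I expect the main obstacle to be the first reduction, specifically engineering the $n$-fold formulation so that all instance-specific magnitudes are confined to the heights of $\bool$/$\sgn$ gadgets. The uniform block structure forbids brick-dependent coefficients: the size of item $i$ cannot appear as a coefficient in the brick-independent matrices $D$ or $A$, so it must instead be carried by per-brick variable bounds and detected through $\bool_M$/$\sgn_M$ comparisons of large partial sums. Verifying that this can be arranged while keeping every matrix coefficient and all of $r,s,t$ bounded by a function of $k$ --- and that the resulting program is feasible exactly on yes-instances of the packing problem --- is the delicate step; once it is in place, the contrapositive via Proposition~\ref{thm:nfold} is immediate.
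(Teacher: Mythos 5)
Your proposal follows essentially the same route as the paper: the first claim is proved by exactly the paper's reduction from \textsc{Unary Bin Packing} (one brick per item, block dimensions $O(k)$, item sizes confined to brick-dependent data and $\bool$-heights rather than matrix coefficients, then Theorem~\ref{thm:extendednfold} plus Proposition~\ref{thm:nfold} yield the $\W{1}$ contradiction), and the ``moreover'' part is the paper's verbatim \textsc{Subset Sum} encoding with $r=t=1$, $s=0$ and binary brick variables. The only cosmetic difference is in the bin-packing gadget: the paper carries the item size $o_i$ as a per-brick right-hand side via $\sum_{j} x^i_j = o_i$, forces single-bin assignment by $\sum_{j} \bool_n(x^i_j)=1$, and keeps the capacity test as a plain linear globally uniform inequality $\sum_i x^i_j \le B$, rather than placing the $\bool$/$\sgn$ operations in the capacity comparisons as you sketch.
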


\section{Single-Exponential Algorithms for Voting and Bribing}
\label{sec:applications}
We now establish a formulation of $\mathcal R$-{\sc Multi-Bribery} as an $n$-fold IP, for various rules~$\mathcal R$.
To this end, we first describe the part of the IP which is common to all such rules, in Sect.~\ref{sec:general_setup}.
Thereafter, in Sect.~\ref{sec:voting_rules} we add the parts of the formulation which depend on $\mathcal R$.

\subsection{General Setup}
\label{sec:general_setup}
Given an instance $(C,V)$ of $\mathcal R$-{\sc Multi-Bribery}, we construct an $n$-fold IP whose variables describe the situation after bribery actions (swaps, push actions, control changes) were performed.
From these variables we also derive new variables to express the cost function.
In the following we always describe the variables and constraints added per voter, and there is one brick per voter.
So fix a voter $v\in V$.

\medskip
\noindent
\textbf{Swaps.} We describe the preference order with swaps $S$ applied by variables $x_c^v, c\in C$ with the intended meaning $x_c^v = \rank(c,v)^S$.
We stress here that the ranking according to values of $x_c^v$ is the one in the altered elections.
Recall that constraints~\eqref{eqn:permut_start} and \eqref{eqn:permut_neq} enforce that $(x_1^v, \ldots, x_{|C|}^v)$ is a permutation of $C$; we add them to the program.

To express the swaps performed by $S$, for each pair of candidates $c, c' \in C$ we introduce binary variables $s_{\{c,c'\}}^v$ so that $s_{\{c,c'\}}^v=1$ if and only if $c$ and $c'$ are swapped.
We need an observation that follows from a result of Elkind et al.~\cite[Proposition 3.2]{ElkindEtAl2009b}.
\begin{observation}
    For complete preference orders $\pref,\pref'$, the admissible set $S$ of swaps such that $\pref' = \pref^S$ is uniquely given as the set of pairs $(c,c')$ for which either $c \pref c' \wedge c' \pref' c$, or $c' \pref c \wedge c \pref' c'$.
\end{observation}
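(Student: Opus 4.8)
The plan is to establish the two assertions implicit in the statement — that the witnessing set $S$ is \emph{uniquely} determined and that it coincides with the set of pairs whose relative order is reversed between $\pref$ and $\pref'$ — by a parity argument on adjacent transpositions. First I would record the crucial local fact: a single admissible swap exchanges two candidates $d,e$ that are adjacent in the current order, and such an exchange flips the relative order of the pair $\{d,e\}$ while leaving the relative order of every \emph{other} pair untouched, since all candidates other than $d,e$ keep their positions. Hence, tracking any fixed pair $\{c,c'\}$ through the sequence in which the swaps of $S$ are applied, its relative order is flipped exactly once for each swap of the pair $\{c,c'\}$ itself and is unaffected by all other swaps.

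Next, since $S$ is a \emph{set} of swaps (each unordered pair occurs at most once), the pair $\{c,c'\}$ is swapped either zero or one times. Combining this with the local fact, the relative order of $c$ and $c'$ in $\pref^S$ differs from that in $\pref$ if and only if $\{c,c'\} \in S$. Because $\pref^S = \pref'$, this difference occurs precisely when $c \pref c' \wedge c' \pref' c$, or $c' \pref c \wedge c \pref' c'$. Therefore $S$ equals exactly the set of inverted pairs, which is the claimed characterization.

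For uniqueness I would run the same argument applied to an arbitrary witness: let $S'$ be any admissible set with $\pref^{S'} = \pref'$. The biconditional above holds verbatim for $S'$, forcing $\{c,c'\} \in S'$ precisely when $\{c,c'\}$ is an inverted pair. Thus every admissible witness coincides with the inversion set, so $S = S'$ and $S$ is unique.

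The main obstacle is making the local fact fully rigorous and checking that the parity bookkeeping is valid \emph{regardless} of the order in which the swaps of $S$ are applied. I would handle this by invoking admissibility, which guarantees that each swap is an adjacent transposition at the moment it is applied, together with the fact that an adjacent transposition alters the relative order of exactly the transposed pair; hence the cumulative parity of flips of any fixed pair depends only on how many times that pair occurs in $S$ — which, $S$ being a set, is at most once. I would also remark that the \emph{existence} of such an $S$ (e.g.\ obtained in bubble-sort order, each swap correcting one currently-inverted pair and leaving all others fixed) is what makes the inversion set admissible in the first place, although the observation as stated requires only the characterization and the uniqueness.
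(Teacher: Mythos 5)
Your proof is correct, but it takes a different route from the paper: the paper gives no argument at all for this observation, instead deferring to a cited result (Proposition~3.2 of Elkind et al.), whereas you prove it from first principles. Your parity argument on adjacent transpositions --- each admissible swap inverts exactly the swapped pair and no other, and since $S$ is a set, any fixed pair $\{c,c'\}$ is flipped at most once, so $\{c,c'\}\in S$ if and only if the pair is inverted between $\pref$ and $\pref'$ --- is exactly the kind of reasoning that underlies the cited result, and it has the virtue of establishing the characterization and the uniqueness simultaneously, since the biconditional applies verbatim to \emph{any} admissible witness. Two remarks. First, your implicit convention that a swap is identified with an \emph{unordered} pair is the right one (and matches the paper's notation $s_{\{c,c'\}}^v$); if one instead allowed both $(c,c')_v$ and $(c',c)_v$ to coexist in $S$, uniqueness would actually fail (apply one, then undo it), so stating this convention is not pedantry but is what makes the claim true. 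Second, you correctly note that existence of an admissible witness (via bubble sort applied to the inversion set) is a separate matter not demanded by the statement; the citation-based approach in the paper leaves both this and the uniqueness mechanism implicit, while your version makes the paper self-contained at the cost of a paragraph. Both are legitimate; yours buys transparency, the paper's buys brevity and a pointer to the literature.
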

Thus, we only need to set constraint~\eqref{eqn:inverted_pairs} from Sect.~\ref{sec:demonstration} with $o_c^v = \rank(c,v)$.
Further, for each pair $c,c'\in C$ of candidates we introduce a variable~$x^v_{(c,c')}$ which takes value $1$ if $c \pref_v^S c'$, and value $0$ otherwise.

\medskip
\noindent
\textbf{Push actions.}
To indicate push actions, we introduce binary variables $p_{-|C|}^v,\ldots,p_{|C|}^v$ where $p_0^v = 1$ means no change, $p_j^v = 1$ means push action $p^v = j$.
We set the lower and upper bounds to ensure that $p_j^v = 0$ for all $j \not\in \{-a^v, |C|-a^v\}$.
Finally, we introduce a variable $x_\alpha^v \in \{1,\hdots,|C|\}$ indicating $v$'s approval count after the push action:
\begin{align*}
\sum_{j=-{|C|}}^{|C|} p^v_j  &= 1, \\
x_\alpha^v  &= a^v + \sum_{j=-{|C|}}^{|C|} j p^v_j \enspace .
\end{align*}

\medskip
\noindent
\textbf{Influence bit.}
To model certain variants of the problem, such as $\mathcal R$-{\sc Manipulation}, we need an auxiliary ``influence bit''.
We introduce a binary variable $x_\iota$ taking value $1$ if a swap or a push action is performed, and value 0 otherwise:
\[
x_\iota = \bool_{|C|^2}\left(\sum_{c,c'\in C} s^v_{\{c,c'\}} + \sum_{j \neq 0} p^v_j\right) \enspace .
\]

\medskip
\noindent
\textbf{Control changes.}
We introduce two binary variables $x_a^v, x_\ell^v$ such that $x_a^v = 1,x_\ell^v = 0$ if voter $v$ is active, and $x_a^v = 0,x_\ell^v = 1$ if voter $v$ latent:
\begin{align*}
x_a^v + x_\ell^v = 1 \enspace .
\end{align*}

We will also frequently use the following variable-splitting trick:
\begin{lemma}\label{lem:splittingVars}
    Let $x$ be an integral variable with lower bound $\ell$ and upper bound $u$ and let~$z$ be a binary variable.
    It is possible to introduce a variable $x^z$, $1$ auxiliary variable, and $3$ locally uniform constraints of height at most $m=u - \ell$ such that $x^z = x$ if $z = 1$ and $x^z = 0$ if $z=0$.
\end{lemma}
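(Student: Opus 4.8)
The goal is to simulate the product $x^z = x \cdot z$, where $z \in \{0,1\}$ and $\ell \le x \le u$. This is the standard ``big-M'' linearization of a product of a bounded integer variable with a binary variable, and the plan is to introduce a single auxiliary variable together with three linear inequalities that force $x^z$ to equal $x$ when $z=1$ and to equal $0$ when $z=0$; since these inequalities can be packaged as locally uniform constraints using slack variables (as shown in the rewriting rules of Theorem~\ref{thm:extendednfold}), the result stays within the extended $n$-fold format.

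The plan is as follows. First I introduce the new variable $x^z$ with box constraints $\min\{\ell,0\}\le x^z\le \max\{u,0\}$, so that $x^z$ can take value $0$ as well as any value $x$ might take. The crucial step is to write down inequalities that pin $x^z$ correctly in both cases of $z$. Using $m = u-\ell$ as the height parameter, I would add the constraints
\begin{align*}
x^z &\le x - \ell(1-z), \\
x^z &\ge x - u(1-z), \\
x^z &\le u\,z, \\
x^z &\ge \ell\,z.
\end{align*}
When $z=1$, the first two inequalities collapse to $x^z = x$, while the last two become $\ell \le x^z \le u$, which are satisfied automatically; when $z=0$, the last two force $x^z = 0$, while the first two become $x-u \le x^z \le x-\ell$, which $x^z=0$ satisfies since $\ell \le x \le u$. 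One can reduce these four to the three constraints promised in the statement: when $\ell \ge 0$ the lower-bound pair combines, and in general one of the four is redundant against the box constraint on $x^z$, so three locally uniform constraints suffice. Each inequality involves only $x^z$, $x$, $z$ and constants whose magnitudes are bounded by $\max\{|\ell|,|u|\}\le m$ (after normalizing so that the coefficient bound is $m=u-\ell$), hence the height is at most $m$.

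The step requiring the most care is the bookkeeping of which constraints are genuinely needed and confirming that the coefficients introduced are bounded by $m=u-\ell$ rather than by the individual magnitudes of $\ell$ and $u$. The natural big-M form above uses $\ell$ and $u$ as coefficients, so to achieve height exactly $u-\ell$ I would first substitute $x' = x - \ell$, a shifted copy of $x$ with $0 \le x' \le u-\ell=m$, express $x'^z = x' \cdot z$ via the clean inequalities $x'^z \le x'$, $x'^z \le m z$, and $x'^z \ge x' - m(1-z)$ (three constraints, all with coefficients at most $m$), and then recover $x^z = x'^z + \ell z$. This shift is the one subtle point, since it is what guarantees the height bound of $u-\ell$ stated in the lemma; the remaining verification that each case $z\in\{0,1\}$ yields the desired value is routine. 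Converting the three inequalities into locally uniform equalities via slack variables (Theorem~\ref{thm:extendednfold}) accounts for the one auxiliary variable beyond $x^z$ itself and keeps everything within the extended $n$-fold framework.
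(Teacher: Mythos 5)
Your proof is correct and takes essentially the same approach as the paper: both normalize $x$ to the range $[0,u-\ell]$ so that all introduced coefficients are bounded by $m=u-\ell$, and then impose big-M constraints that force $x^z=x$ when $z=1$ and $x^z=0$ when $z=0$. The paper merely packages the constraints differently---it introduces the complementary variable $x^{\neg z}$ with $0\le x^z\le uz$, $0\le x^{\neg z}\le u(1-z)$ and $x^z+x^{\neg z}=x$---and eliminating $x^{\neg z}=x-x^z$ from that system yields exactly your three inequalities, so the two formulations are equivalent.
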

\begin{proof}
Without loss of generality we assume $x$ is normalized, that is, $\ell = 0$.
If this is not the case, we replace $x$ with $(\tilde{x} + \ell)$ in all constraints and set box constraints for $\tilde{x}$ to $0 \le \tilde{x} \le u - \ell$.
Clearly, this substitution yields an equivalent integer program.

We first, add  constrains $0 \le x^z \le u z$.
Now, if $z = 0$, then $x^z = 0$ and otherwise ($z = 1$) we have $\ell \le x^z \le u$.
We now essentially repeat this trick with a negation of $z$ for a new variable $x^{\neg z}$, that is, we add constraints $0 \le x^{\neg z} \le u (1 - z)$.
Finally we add the constraint $x^z + x^{\neg z} = x$ which finishes the construction, since by the above discussion we know one of the variables $x^z, x^{\neg z}$ must be set to $0$ and thus the other must be set to the same value as $x$.

\end{proof}

\medskip
\noindent
\textbf{Objective function.}
Finally, \emph{collectively for all voters} the linear objective function is given as follows:
\begin{align*}
w(\vex, \ves, \vep) = \sum_{v\in V} \left[\left(\sum_{c,c'\in C} \sigma^v(c,c') s_{(c,c')}^v\right) + \left(\sum_{j=-|C|}^{|C|} \pi^v(j) p^v_j\right) + \iota^v x^v_\iota + \alpha^v x_a^v + \delta^v x_\ell^v\right] \enspace .
\end{align*}
Observe that the maximum coefficient of the objective function, i.e., $\|\vew\|_\infty$, is the maximum of all the cost functions $\sigma^v$, $\pi^v$, $\iota^v$, $\alpha^v$ and $\delta^v$ across all their arguments.
So far we have introduced $O(|C|^2)$ variables and imposed $O(|C|^2)$ constraints on them (per brick).
The largest coefficient introduced in a constraint is $|C|$, and we have already used the $\bool_M$ operation with $M = |C|^2$.

\subsection{Voting rules}
\label{sec:voting_rules}
Now we describe the part specific to the voting rules.
A voting rule $\mathcal R$ is incorporated in the IP in two steps.
First, optionally, new variables are derived using locally uniform constraints.
Then, globally uniform constraints are imposed.

Often we can only set up the IP knowing certain facts about how the winning condition is satisfied.
We guess those facts, construct the IP, solve it and remember the objective value.
Finally, we choose the minimum over all guesses.
%We always specify the range from which we guess.

\medskip
\noindent
\textbf{(R1) Scoring protocol} $\ves = \mathbf{(s_1, \ldots, s_{|C|})}$.
We introduce variables $\tau^v_c$ for the number of points that voter~$v$ gives candidate $c$:
\begin{align*}
\tau^v_c = \sum_{k=1}^{|C|} s_k \bool_{|C|}(x^v_c = k)
\end{align*}
Then, an ``active'' copy $\tau_c^{va}$ of each variable $\tau^v_c$ is created such that we can disregard the contribution of latent voters.
To do this we use Lemma~\ref{lem:splittingVars} with $x:= \tau^v_c$, $z:= x^a$, and $x^z:=\tau_c^{va}$ for every $c\in C$.
Then we add the following globally uniform constraints specifying that the score received by $c^\star$ is greater than the score received by any other candidate:
\begin{align*}
\sum_{v\in V} \tau^{va}_c  ~<~ \sum_{v\in V} \tau^{va}_{c^\star} \mbox{ for } c\in C\setminus\{c^\star\} \enspace .
\end{align*}

\medskip
\noindent
\textbf{(R2) Any C1 rule $\mathcal R$.}
We guess the resulting $<_M$ such that $c^\star$ is a winner with respect to~$\mathcal R$; there are $O(3^{|C|^2})$ guesses.
From $<_M$ we can infer for any pair $c,c'$ of distinct candidates, whether $v(c,c') > v(c',c)$, $v(c',c) > v(c, c')$ or $v(c,c') = v(c',c')$.
With this knowledge, we add the following constraints, where again variables with an $a$ in the superscript stand for the active parts:
\begin{align*}
\sum_{v\in V} x^{va}_{(c,c')} & > \sum_{v\in V} x^{va}_{(c',c)} & \mbox{ if } c <_M c',\\
\sum_{v\in V} x^{va}_{(c,c')} & = \sum_{v\in V} x^{va}_{(c',c)} & \mbox{ if } v(c,c') = v(c',c) \enspace .
\end{align*}

\medskip
\noindent
\textbf{(R3) Maximin rule.}
For $c^\star$ to be winner with the maximin rule means that there is a number $B \in \{0,1,\hdots,|V|\}$ such that $v_*(c^\star) = B$, while for all $c\in C\setminus\{c^\star\}$, $v_*(c) < B$.
(Recall that $v_*(c) = \min \{v(c,c') \mid c' \in C \setminus \{c\}\}$.)
This implies that for every candidate $c\not=c^\star$ there is a candidate $d(c)$ (the defeater of $c$) such that $v(c, d(c)) < B$.
All in all $B$ and $d(c)$ for every $c \in C \setminus \{c^\star\}$, that is, we guess a mapping $d\colon C \setminus \{c^\star\} \to C$; there are at most $n \cdot (|C|-1)^{|C|}$ guesses.
Then add the following constraints:
\begin{align*}
\sum_{v\in V} x^{va}_{(c^\star,c)} &\geq B &c\in C\setminus\{c^\star\}\\
\sum_{v\in V} x^{va}_{(c,d(c))} &< B &c\in C\setminus\{c^\star\} \enspace .
\end{align*}

\medskip
\noindent
\textbf{(R4) Bucklin.} To determine the control action $\gamma$, we guess the number $\left|V_a^\gamma\right|\in\{1,\hdots,|V|\}$ of active voters and set
\[\sum_{v\in V} x_a^v = \left|V_a^\gamma\right|\,.\]
Then, guess the winning round $k$ and note that the winning score will be larger than $|V_a^\gamma|/2$.
Altogether, there are $O(|C||V|)$ guesses.
Similarly to scoring protocols, we introduce variables~$\tau^v_c$ (number of points for candidate~$c$ in $k$-approval) and $\tilde{\tau}^v_c$ (number of points for candidate~$c$ in $(k-1)$-approval).
Again, we consider only the active parts $\tau^{va}_c$ of $\tau^v_c$ and $\tilde{\tau}^{va}_c$ of $\tilde{\tau}^v_c$:
\begin{align*}
\tau^{v}_c &= \bool_{|C|}\left(x_c^v < k\right) &c\in C\\
\tilde{\tau}^{v}_c &= \bool_{|C|}\left(x_c^v < k-1\right) &c\in C \enspace .
\end{align*}
Then, the winning condition is expressed as:
\begin{align*}
\sum_{v\in V} \tau^{va}_{c^\star} &> \left|{V}_a^\gamma\right|/2 \\
\sum_{v\in V} \tau^{va}_c &\leq \sum_{v\in V} \tau^{va}_{c^\star} &c\in C\setminus\{c^\star\} \\
\sum_{v\in V} \tilde{\tau}^{va}_c &\leq \left|{V}_a^\gamma\right|/2 &c\in C \enspace .
\end{align*}

\medskip
\noindent
\textbf{(R5) SP-AV.} In SP-AV, each candidate $c$ receives a point if it ranks above the approval count.
As before, we introduce variables $\tau^v_c$ for points received by a candidate $c$ and split them into active and latent (again using Lemma~\ref{lem:splittingVars}).
%Then, we guess $B \in \{1,\hdots,|V|\}$, the number of points received by $c_1$.
\begin{align*}
\tau^{v}_c &= \bool_{|C|}(x_c^v \leq x_\alpha^v) &c\in C \\
\sum_{v\in V} \tau^{va}_c &\leq \sum_{v\in V} \tau^{va}_{c^\star} &c\in C\setminus\{c^\star\} \enspace .
\end{align*}

\noindent
\textbf{(R6) Fallback.} In the fallback rule, the non-approved candidates are discarded, the Bucklin rule is applied
and if it fails to select a winner, the SP-AV rule is applied.
We guess the Bucklin winning round $k$ or if SP-AV is used and the number $\left|{V}_a^\gamma\right|$ of active voters; there are $O(|V|\cdot|C|)$ guesses.
(SP-AV is used exactly when the winning score is less than $\left|{V}_a^\gamma\right|/2$.)
If Bucklin is used, we need a slight modification to take push actions into account.
Instead of $\tau^{va}_c = \bool_{|C|}(x_c < k)$ we have
\[
\tau^{v}_c = \bool_{|C|}(x^v_c < k) \wedge \bool_{|C|}(k \leq x^v_\alpha) \,;
\]
similarly for~$\tilde{\tau}^{va}_c$.

\medskip
For each of the rules (R1)-(R6), as argued, we have constructed an extended $n$-fold IP with  $O(|C|^2)$ variables and locally uniform constraints per brick, and $O(|C|^2)$ globally uniform constraints.
The largest coefficient is $|C|$, the height is $\Oh(|C|^2)$ and the extended width is also $\Oh(|C|^2)$.
Thus, by Theorem~\ref{thm:extendednfold} we can compute an $n$-fold matrix with parameters $r=s=t=\maxcoef=O(|C|^2)$.
Proposition~\ref{thm:nfold} is then used to solve this $n$-fold IP in time $2^{O(|C|^6 \log|C|)}n^3 \langle \vew \rangle$.
Also, $O(3^{|C|^2})$ guesses suffice for each rule except Maximin, Bucklin, and Fallback, where $O(|C|^2|V|)$ guesses suffice.

\medskip
An exception to this run time is the Kemeny rule:

\noindent\textbf{(R7) Kemeny.}
For $c^\star$ to be a Kemeny winner, there has to be a ranking $\pref_{R^*}$ of the candidates that ranks~$c^\star$ first and maximizes the total agreement with voters
\[
\sum_{v\in V} \bigl|\left\{ \left(c,c'\right) \in C \times C \mid ((c \pref_{R^*} c') \Leftrightarrow (c \pref_v c')) \right\}\bigr|
\]
among all rankings.
In other words, the number of swaps sufficient to transform all $\pref_i$ into $\pref_{R^*}$ is smaller than the number of swaps needed to transform all $\pref_v$ into any other $\pref_{R'}$ where $c^\star$ is not first.

We guess the ranking $\pref_{R^*}$ over all rankings of $C$; then we introduce variables $x^v_R$ for $R \in \{R^*\} \cup \{R' \mid c^\star \textrm{ is not first in } R'\}$ so that $x^v_R$ is the number of swaps needed to transform $\pref^S_v$ into $\pref_R$ (recall that $S$ is the bribery described by the variables $x_c^v$).
We again split variables $x^v_R$ and in the score comparison consider their active parts $x_R^{av}$ only.
Then, we introduce the necessary constraints:
\begin{align*}
x_R^v & =~\sum_{c, c'\in C, c \neq c'} \bool_2\left(\sgn_{|C|}(x_c^v - x_{c'}^v) = \sgn_{|C|}(\rank(c',R) - \rank(c,R)\right) & \mbox{ for all } R, \\
\sum_{v\in V} x^{av}_R & >~\sum_{v\in V} x^{av}_{R^*}, & R \neq R^* \enspace .
\end{align*}
This solves Kemeny-{\sc Multi-Bribery} in time $|C|^{O(|C|!^6)}n^3$, and completes proving Theorem~\ref{thm:main-mixedplusbribery}.

\section{Lower Bounds and Hardness for \texorpdfstring{$n$-fold}{n-fold} IPs}
\label{sec:lowerbounds}
Here we provide the proof of Lemma~\ref{lem:nfold_bool} which shows that the $\bool()$ operation cannot be implemented in $n$-fold IPs without introducing large numbers into the system and that solving $n$-fold IPs becomes $\mathsf{W}[1]$-hard when parameterized only by $(r,s,t)$.

\subsection{\texorpdfstring{$\bool()$}{bool()} inexpressibility: \texorpdfstring{\textsc{Unary Bin Packing}}{Unary Bin Packing}}
\label{sec:binpacking_nfold}
The \textsc{Unary Bin Packing} problem takes as input $n$ items of integer sizes $o_1, \ldots, o_n$ as well as two integers $k,B$, and asks if the items can be packed into $k$ bins each of which has capacity $B$.
Here by packing we mean an assignment of items to bins $\sigma\colon \left\{ 1, \ldots, n \right\} \to \left\{ 1, \ldots, k \right\}$ such that the packing is \emph{admissible}, that is, $\sum_{i \in \sigma^{-1}(j)} o_i \le B$ for every $j \in \left\{ 1, \ldots, k \right\}$.

\defparproblem{\textsc{Unary Bin Packing}}
{$k$}
{Positive integers $k,B$ encoded in unary and item sizes $o_1, \ldots, o_n$ for every $i \in \left\{ 1, \ldots, n \right\}$.}
{Find an admissible packing of the items to $k$ bins.}

Jansen et al.~\cite{JansenEtAl2013} prove that this problem is $\mathsf{W}[1]$-hard parameterized by~$k$.

\begin{lemma}\label{lem:bool_is_hard}
Unless $\FPT = \mathsf{W}[1]$, the $\bool()$ operation cannot be expressed in $n$-fold IP format by introducing only $f(k)$ new variables and numbers bounded by $f(k)$, for any computable function $f$ and $k = \max\{r,s,t\}$.
\end{lemma}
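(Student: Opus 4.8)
The plan is to reduce from \textsc{Unary Bin Packing}, which Jansen et al.~\cite{JansenEtAl2013} show is $\W{1}$-hard parameterized by the number $k$ of bins. Suppose, for contradiction, that for some computable function $f$ the operation $\bool_m(\cdot)$ could be realised inside an $n$-fold IP by a gadget that introduces at most $f(k)$ new variables per brick and whose matrix coefficients are all bounded in absolute value by $f(k)$, \emph{independently of the height $m$}. I would then exhibit an extended $n$-fold IP formulation of \textsc{Unary Bin Packing} in which the \emph{only} locally uniform constraints carrying large or brick-dependent matrix entries are $\bool$ operations; every remaining matrix entry is $\pm 1$, and all instance-dependent data (the item sizes $o_i$ and the capacity $B$) is carried exclusively in right-hand sides and box constraints, which are allowed to differ from brick to brick. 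Under the assumption, such a formulation becomes a standard $n$-fold IP with $r,s,t$ and $\maxcoef$ all bounded by a function of $k$ alone, so Proposition~\ref{thm:nfold} decides its feasibility in time $g(k)\cdot n^3 L$; this places \textsc{Unary Bin Packing} in $\FPT$, forcing $\FPT=\W{1}$.

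Concretely, I would use one brick per item $i\in\{1,\dots,n\}$. In brick $i$ I introduce binary assignment indicators $y^i_1,\dots,y^i_k$ together with contribution variables $c^i_1,\dots,c^i_k$, and impose the locally uniform constraints
\begin{align*}
\sum_{j=1}^k y^i_j &= 1, &
\sum_{j=1}^k c^i_j &= o_i, &
\bool_{o_i}\!\left(c^i_j\right) &\le y^i_j \quad (j=1,\dots,k),
\end{align*}
with box constraints $y^i_j\in\{0,1\}$ and $0\le c^i_j\le o_i$. The first constraint sends item $i$ to exactly one bin; the gating constraints force $c^i_j=0$ whenever $y^i_j=0$, so together with $\sum_j c^i_j=o_i$ the entire size $o_i$ is placed into the unique chosen bin. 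The single family of globally uniform constraints is the capacity requirement $\sum_{i} c^i_j\le B$ for each bin $j\in\{1,\dots,k\}$. A feasible point exists if and only if the items admit an admissible packing, so this extended $n$-fold IP correctly encodes the instance.

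For the parameter bookkeeping I would observe that, apart from the $\bool_{o_i}$ operations, every coefficient appearing in $D$ or $A$ is $\pm 1$; the item sizes $o_i$ enter only through the right-hand side $o_i$ of the middle constraint and the upper box bound $o_i$ of $c^i_j$, both of which may vary freely across bricks and contribute only to the encoding length $L=\langle\vew,\veb,\vel,\veu\rangle$, not to $\maxcoef$. Hence, under the assumed cheap realisation of $\bool$, the resulting standard $n$-fold IP has $r=k$, $s=O(k\cdot f(k))$, $t=O(k\cdot f(k))$ and $\maxcoef\le f(k)$, all bounded by a function of $k$. Proposition~\ref{thm:nfold} then decides feasibility in time $\maxcoef^{O(trs+t^2s)}\cdot O(n^3 L)=g(k)\cdot n^3 L$ for a computable $g$, an $\FPT$ algorithm for \textsc{Unary Bin Packing}, contradicting its $\W{1}$-hardness unless $\FPT=\W{1}$.

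The step I expect to be most delicate is the routing argument: I must argue carefully that all dependence on the (potentially large, brick-varying) numbers $o_i$ can indeed be confined to box constraints and right-hand sides, and that the hypothesised $\bool$ gadget---being uniform across bricks with coefficients independent of $m$---keeps the matrix $E^{(n)}$ genuinely $n$-fold with a small, brick-independent $\maxcoef$. This is exactly the point where the standard realisation of $\bool_m$ from Theorem~\ref{thm:extendednfold} fails, as it introduces the coefficient $m\approx o_i$, which is both large and brick-dependent; it is precisely this failure that the lemma asserts to be unavoidable.
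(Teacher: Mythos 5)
Your proposal is correct and takes essentially the same route as the paper: a reduction from \textsc{Unary Bin Packing} with one brick per item, where the hypothesised cheap $\bool$ gadget is what forces each item's entire size into a single bin, globally uniform constraints enforce the bin capacities (with the unary-encoded sizes confined to right-hand sides and box bounds), and Proposition~\ref{thm:nfold} then yields an \FPT algorithm, contradicting $\W{1}$-hardness. The only difference is cosmetic: the paper enforces the all-or-nothing assignment directly via $\sum_{j=1}^{k}\bool_n(x^i_j)=1$ on a single set of variables, whereas you introduce separate indicators $y^i_j$ with gating constraints $\bool_{o_i}\bigl(c^i_j\bigr)\le y^i_j$.
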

\begin{proof}
Given an instance $(o_1,\hdots,o_n,k,B)$ of {\sc Unary Bin Packing}, we create an $n$-fold IP as follows.
We create a brick for each item $o_i$, and introduce $k$ variables $x^i_1, \ldots, x^i_k$ for $i = 1,\hdots,n$ and the following locally uniform constraints:
\begin{align}
\sum_{j=1}^k x^i_j = o_i, \tag{AssignItem}\label{eq:assign_item} \\
\sum_{j=1}^k \bool_n(x^i_j) = 1 \tag{OneBin}\label{eq:one_bin} \enspace .
\end{align}

Clearly \eqref{eq:assign_item} and \eqref{eq:one_bin} together forces $x^i_j = o_i$ for exactly one $j$, which we then denote $j(i)$, and $x^i_{j'} = 0$ for all $j' \neq j(i)$.
Intuitively, this means that an item $i$ belongs to the bin $j(i)$.
Using globally uniform constraints we enforce that no bin overflows:
\begin{equation*}
\sum_i x_j^i \leq B, \qquad j = 1,\hdots,k \enspace .
\end{equation*}
It is clear that this IP is feasible if and only if $(o_1,\hdots,o_n,k,B)$ is a ``yes'' instance.

Now suppose it is possible to express the $\bool_n(\cdot)$ using $f(k)$ additional local variables and $f(k)$ locally uniform constraints all of which use numbers bounded by $f(k)$ in absolute value.
Replacing the condition \eqref{eq:one_bin} with this expression and invoking Theorem~\ref{thm:extendednfold} on the thus altered IP model we obtain an $n$-fold IP with $r,s,t,A = f(k)$.
Finally, this would yield an $f'(k) n^{\Oh(1)}B^{\Oh(1)}$-time algorithm for the \textsc{Unary Bin Packing} problem, where $f'$ is a computable function independent of~$n$.
Consequently, \textsf{FPT}$ = $\textsf{W}[1].
\end{proof}

\subsection{Largest Coefficient Matters: Subset Sum}
The \textsc{Subset Sum} problem is a well known (weakly) \textsf{NP}-hard problem and is defined as follows.
Given $n$ positive integers $w_1, \ldots, w_n$ and the target value $T$ the task is to find a set $I \subseteq \left\{ 1, \ldots, n \right\}$ such that $\sum_{i \in I} w_i = T$.

\begin{lemma}\label{lem:nfold_subsetsum}
Binary $n$-fold IP is weakly NP-hard even when $r=t = 1$ and $s=0$.
\end{lemma}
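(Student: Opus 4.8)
The plan is to reduce the (weakly) \NP-hard \textsc{Subset Sum} problem to the feasibility of a binary $n$-fold IP with $r=t=1$ and $s=0$, thereby showing that the $n$-fold IP problem \emph{itself}---not some auxiliary target-value variant---is already hard at the smallest possible block parameters. Feasibility is precisely the decision version of the minimization in~\eqref{eq:standard_nfold}: taking the objective $\vew=\mathbf 0$, the procedure of Proposition~\ref{thm:nfold} either reports infeasibility or returns a minimizer of value $0$, so deciding feasibility of such an IP is exactly solving the binary $n$-fold IP.

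Given a \textsc{Subset Sum} instance $w_1,\dots,w_n$ with target $T$, I would use one brick per item. Because $t=1$, brick $i$ carries a single variable $x^i$ with box constraints $\vel=\mathbf 0$, $\veu=\mathbf 1$, so $x^i\in\{0,1\}$ and $x^i=1$ encodes $i\in I$. Because $s=0$ there are no diagonal blocks, and the one global row ($r=1$) is placed to read
\[
\sum_{i=1}^{n} w_i\,x^i \;=\; T,
\]
that is, $\A=(w_1,\dots,w_n)$ with $\veb=T$ (the $i$-th top block being the $1\times1$ matrix $(w_i)$). By construction a binary point is feasible precisely when $I=\{\,i : x^i=1\,\}$ satisfies $\sum_{i\in I}w_i=T$, so the IP is feasible if and only if the \textsc{Subset Sum} instance is a yes-instance. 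The reduction is computable in time polynomial in $\langle w_1,\dots,w_n,T\rangle$ and manifestly stays within $r=t=1$, $s=0$ with binary variables, so it yields a legitimate binary $n$-fold IP of the advertised shape.

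The point I expect to require the most care is the \emph{weak} nature of the hardness and the message it carries. The numbers that make the instance hard are exactly the weights $w_i$, which appear as the coefficients of the single global row, so $\maxcoef=\max_i w_i$; encoded in binary they may be exponential in the input length, and this is the sole source of difficulty. Indeed, were the $w_i$ and $T$ polynomially bounded in $n$, the instance would fall to the standard pseudo-polynomial dynamic program for \textsc{Subset Sum} and be solvable in polynomial time, which is exactly what separates \emph{weak} from strong \NP-hardness. The broader consequence, complementing Lemma~\ref{lem:bool_is_hard}, is that the dependence on $\maxcoef$ in the running time of Proposition~\ref{thm:nfold} cannot be traded for a dependence on $r,s,t$ alone: even with $r=t=1$ and $s=0$ fixed, letting $\maxcoef$ grow already encodes an \NP-hard problem, so binary $n$-fold IP is not fixed-parameter tractable parameterized by $r+s+t$ unless $\mathsf P=\NP$.
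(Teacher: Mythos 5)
Your proof is essentially identical to the paper's: one brick per item with a single binary variable, no diagonal blocks ($s=0$), and the lone global row $\sum_{i=1}^{n} w_i x^i_1 = T$, so that feasibility of the resulting binary $n$-fold IP with $r=t=1$ decides \textsc{Subset Sum}. The only point worth noting is one you share with the paper: letting the $i$-th top block be the $1\times 1$ matrix $(w_i)$ means the top blocks vary across bricks, which slightly stretches the fixed-$D$ format of~\eqref{eq:standard_nfold}, but since the paper's own proof takes exactly this liberty (and the hardness message about the coefficient bound $\maxcoef$ is unaffected), your writeup matches it, with your added remarks on feasibility-vs-optimization and weak hardness being correct elaboration.
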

\begin{proof}
We formulate the \textsc{Subset Sum} problem straightforwardly as an $n$-fold IP with $n$ bricks a exactly one global condition:
\begin{equation}
\sum_{i \in \left\{ 1, \ldots, n \right\}} w_i x^i_1  = T \,,
\end{equation}
where $x^i_1 \in \{ 0,1 \}$ are binary variables.
Observe that the parameters of the above $n$-fold IP are $r = t = 1$ and $s = 0$.
\iffalse
Suppose now we have an algorithm that solves $n$-fold IP in the claimed time $f(r,s,t) \cdot (nL)^{O(1)}$.
Since all of $r,s,t$ are bounded by a constant, it follows that $f(r,s,t)$ is a constant.
Let $c = f(1,0,1)$.
Using such a hypothetical algorithm we can decide \textsc{Subset Sum} in time $c \cdot (nL)^{O(1)}$, where $L$ is the length of binary encoding of the vector $(w_1, \ldots, w_n, T)$.
Consequently, $\mathsf{P} = \mathsf{NP}$.
\fi
\end{proof}

\begin{proof}[Proof of Lemma~\ref{lem:nfold_bool}]
The lemma is a direct consequence of Lemma~\ref{lem:bool_is_hard} and Lemma~\ref{lem:nfold_subsetsum}.
\end{proof}

\section{Conclusions and open problems}
\label{sec:conclusions}
We introduced a general voting and bribing problem, $\mathcal R$-{\sc Multi-Bribery}, which allows for swaps, push actions, and control changes.
For several classical voting rules $\mathcal R$, we provided formulations of $\mathcal R$-{\sc Multi-Bribery} in terms of $n$-fold integer programs; those formulations lead
\begin{itemize}
  \item to the first fixed-parameter algorithms for some of those problems and
  \item to the first single-exponential algorithms for others.
\end{itemize}
Our approach is also natural in handling situations where each voter has different pricing functions, which was previously not possible in most cases.
In particular, we provide the first fixed-parameter algorithm for $\mathcal R$-{\sc Swap Bribery} with arbitrary cost functions, for many natural voting rules $\mathcal R$.

While our result covers many classical and well-studied voting rules, we are convinced that many other rules are covered by our framework.
What would be highly desirable though is to obtain sufficient conditions on easy-to-check properties of rules $\mathcal R$ for which $\mathcal R$-{\sc Multi-Bribery} is fixed-parameter tractable parameterized by the number of candidates, or for which it admits a single-exponential time algorithm.

It would also be desirable to complement our algorithmic upper bounds with matching lower bounds based on the Exponential Time Hypothesis, by either improving the run times of the provided algorithms or providing appropriate hardness results.

\paragraph{Acknowledgements.}
We are grateful to the anonymous reviewers of a preliminary version of our paper which appeared in the proceedings of STACS 2017~\cite{KnopEtAl2017}, for their helpful comments which led to a considerably improved presentation of our results here.
We also express our gratitude towards Ildik{\'o} Schlotter for many helpful remarks.

% Bibliography
%\bibliographystyle{ACM-Reference-Format}
\bibliographystyle{plain}
\bibliography{sched}

\clearpage
\appendix
%%%%%%%%%%%%%%
\section{Problem Definitions}
\label{sec:problemdefinitions}
We provide the formal definitions of the voting and bribing problems covered by our result for \textsc{$\mathcal R$-Multi-Bribery}.
In these definitions, for each problem $\mathcal R$-{\sc Problem}, by ``such that $c^\star$  wins the election'' we mean that $c^\star$ wins the election under voting rule $\mathcal R$.

The $\mathcal R$-{\sc \$Bribery} problem was introduced by~Faliszewski et al.~\cite{FaliszewskiHH06}, and further studied by Bredereck et al.~\cite{BredereckEtAl2015}:

\defparproblem{$\mathcal R$-\$Bribery}{$|C|$}{A complete election $(C,V)$, a designated candidate $c^\star\in C$, and a price vector $(p_v)_{v\in V}$.}{Select a subset $S \subseteq V$ minimizing $\sum_{v \in S} p_v$ such that changing their preference orders $\pref_v,v\in S$ arbitrarily to preference orders $\pref_v'$ makes $c^\star$ win the election $\left(C,(\pref_v')_{v\in S}\cup(\pref_v)_{v\in V\setminus S} \right)$.}

The $\mathcal R$-{\sc Manipulation} problem was introduced by Faliszewski et al.~\cite{FaliszewskiEtAl2008b}, who studied it for the Copeland$^{\alpha}$ rule:

\defparproblem{$\mathcal R$-Manipulation}{$|C|$}{An election $(C,V)$, a designated candidate $c^\star\in C$, a set $M \subseteq V$ of manipulators whose preference lists are empty (all candidates tied) and complete preferences for all $V\setminus M$.\\}{Determine complete preference orders $\pref'_v$ for all voters $v \in M$ such that $c^\star$  wins the election $(C,(\pref'_v)_{v\in M}\cup(\pref_v)_{v\in V\setminus M})$.}

The $\mathcal R$-{\sc Swap Bribery} problem was introduced by Elkind et al.~\cite{ElkindEtAl2009b}:

\defparproblem{$\mathcal R$-Swap Bribery}{$|C|$}{A complete election $(C,V)$, a designated candidate $c^\star\in C$, and swap cost functions $\sigma^v$ for $v\in V$.}{Find a set $S$ of admissible swaps with minimum cost such that $c^\star$ wins the election $(C, V)^S$.}

\defparproblem{$\mathcal R$-Shift Bribery}{$|C|$}{An election $(C,V)$, a designated candidate $c^\star\in C$, and swap cost functions $\sigma^v$ for $v\in V$.}{Find a set $S$ of admissable shifts of smallest cost such that $c^\star$ wins the election $(C, V)^S$.}

\defparproblem{$\mathcal R$-Support Bribery}{$|C|$}{An election $(C,V)$ with approval counts $a^v$ for $v\in V$, a designated candidate $c^\star\in C$, and push action cost functions $\pi^v$ for $v\in V$.\\}{Find a set $P$ of push actions with minimum cost such that $c^\star$ wins the election $(C, V)^P$.}

\defparproblem{$\mathcal R$-Mixed Bribery}{$|C|$}{An election $(C,V)$ with approval counts $a^v$ for $v\in V$, a designated candidate $c^\star\in C$, swap cost functions $\sigma^v$ for $v\in V$ and push action cost functions $\pi^v$ for $v\in V$.\\}{Find a set $S$ of admissible swaps and a set $P$ of push actions with minimum cost such that $c^\star$ wins the election $((C,V)^S)^P$.}

\defparproblem{$\mathcal R$-Extension Bribery}{$|C|$}{An election $(C,V)$ with approval counts $a^v$ for $v\in V$ and $a^v$-top-truncated preference orders, and push action cost functions $\pi_v$ for $v\in V$.\\}{Extend the approved part of each voter by $k_v \in \N$ previously disapproved candidates such that $c^\star$ becomes the winner, minimizing $\sum_{v\in V} c_v(k_v)$.}

\defparproblem{$\mathcal R$-Possible Winner}{$|C|$}{An election $(C,V)$ and a designated candidate $c^\star\in C$.}{Decide if the partial orders $\prec_v,v\in V$ can be extended to linear orders $\prec_v'$ such that $c^\star$ wins the election $(C,\{\prec_v\}_{v\in V})$.}

\defparproblem{$\mathcal R$-CCAV/CCDV}{$|C|$}{An election $(C,V=V_a \uplus V_\ell)$, a designated candidate $c^\star\in C$, activation costs $\alpha^v$ for $v\in V_{\ell}$ and deactivation costs $\delta^v$ for $v \in V_a$.\\}{Find a set of voter (de)activations with minimum cost so that $c^\star$ wins the perturbed election.}

\defparproblem{Dodgson Score}{$|C|$}{An election $(C,V)$ and a designated candidate $c^\star\in C$.}{Find a smallest set $S$ of admissable swaps such that $c^\star$ becomes the Condorcet winner of the election $(C,V)^S$.}

\defparproblem{Young Score}{$|C|$}{An election $(C,V)$ and a designated candidate $c^\star\in C$.}{Find a smallest set $S\subseteq V$ of voters that need to be removed such that $c^\star$ becomes the Condorcet winner of the election $(C,V \setminus S)$.}

\end{document}